\newtheorem{theorem}{Theorem}
\newtheorem{remark}{Remark}
\newtheorem{ass}{Assumption}
\newtheorem{prop}{Proposition}
\newcommand{\E}[2]{\mathbb{E}_{#1}\left[ #2\right]}
\newcommand{\Ef}[2]{\hat{\mathbb{E}}_{#1}\left[ #2\right]} 
\newcommand{\hatL}{L^{l\rightarrow c}}
\newcommand{\1}[1]{\mathbbm 1_{#1}} 
\newcommand{\vet}[1]{\textbf{#1}}
\DeclareMathOperator{\rpv}{RPV01}
\DeclareMathOperator{\lgd}{LGD}        
\newcommand{\process}[1]{(#1 _t, t\geq 0) } 
\newcommand{\EUR}[1]{#1 _{\textrm{EUR}} } 
\newcommand{\USD}[1]{#1 _{\textrm{USD}} } 
\newcommand{\Y}[2]{#1 ^{\textrm{{#2}Y}} } 
\author{Damiano Brigo\thanks{Imperial College, London, U.K.
({\texttt{damiano.brigo@imperial.ac.uk}})}
\and\
Nicola Pede\thanks{Imperial College, London, U.K.
({\texttt{n.pede13@imperial.ac.uk}})}
\and\
Andrea Petrelli\thanks{Credit Suisse, London, U.K.
({\texttt{andrea.petrelli@credit-suisse.com}}). }
}
\title{
Multi Currency Credit Default Swaps
}
\date{
First posted on SSRN and arXiv on December 2015\\
Second version posted on SSRN on February 2017\\
This version: \today}
\begin{document}
\onehalfspacing
\maketitle

\begin{abstract}
Credit Default Swaps (CDS) on a reference entity may be traded in multiple currencies, in that protection upon default may be offered either in the currency where the entity resides, or in a more liquid and global foreign currency. In this situation currency fluctuations clearly introduce a source of risk on CDS spreads.  For emerging markets, but in some cases even in well developed markets, the risk of dramatic Foreign Exchange (FX) rate devaluation in conjunction with default events is relevant. We address this issue by proposing and implementing a  model that considers the risk of foreign currency devaluation that is synchronous with default of the reference entity. As a fundamental {case} we consider the sovereign CDSs on Italy, quoted both in EUR and USD.

Preliminary results indicate that perceived risks of devaluation can induce a significant basis across domestic and foreign CDS quotes. For the Republic of Italy, a USD CDS spread quote of 440 bps can translate into a EUR quote of 350 bps in the middle of the Euro--debt crisis in the first week of May 2012. More recently, from June 2013, the basis spreads between the EUR quotes and the USD quotes are in the range around 40 bps. 

We explain in detail the sources for such discrepancies. Our modeling approach is based {on the reduced form framework for credit risk, where the default time is modeled in a Cox process setting with explicit diffusion dynamics for default intensity/hazard rate and exponential jump to default. For the FX part,  we include an explicit default--driven jump in the FX dynamics. As our results show}, such a mechanism provides a further and more effective way to model credit / FX dependency than the instantaneous correlation that can be imposed among the driving Brownian motions{ of default intensity and FX rates, as it is not possible to explain the observed basis spreads during the Euro--debt crisis by using the latter mechanism alone}.
\end{abstract}

{\textbf{AMS Classification Codes} }: 60H10, 60J60, 91B70;

{\textbf{JEL Classification Codes} }: C51, G12, G13  

\medskip

{\textbf{ Keywords:}}
Credit Default Swaps, Liquidity spread, Liquidity pricing, Intensity models, Reduced Form Models, Capital Asset Pricing Model, Credit Crisis, Liquidity Crisis, Devaluation jump, FX devaluation, Quanto Credit effects, Quanto CDS, Multi currency CDS.

\medskip


\setcounter{tocdepth}{2}  
\tableofcontents

\newpage


\section{Introduction}


\subsection{Overview of the {M}odelling {P}roblem}
{The need for q}uanto default modeling arises naturally when pricing credit derivatives offering protection  in {multiple currencies}. 

Reasons for entering into {\emph{Credit Default Swaps} (}CDS{)} in different currencies can come from financial, economic{,} or even legislative considerations{:} they range from the composition of the portfolio that has to be {hedged} to the accounting rules in force in the country where the investor is based. 
{I}n case the reference entity is sovereign, economic reasons play a major role {since} for an investor it might be more appealing to buy protection on{, for example, } Republic of Italy's default in USD rather than in EUR{.
Indeed,} in the latter case the currency value itself is strongly related with the reference entity's default.

{Figure \ref{fig:histSpreads} shows the time series of par--spreads for USD--denominated and EUR--denominated CDSs on Republic of Italy from the beginning of 2011 until the end of 2013. 
The time range has been chosen so to include the 2011 Euro--debt crisis.}

\begin{figure}[!tbh]
\centering
\includegraphics[width = \textwidth]{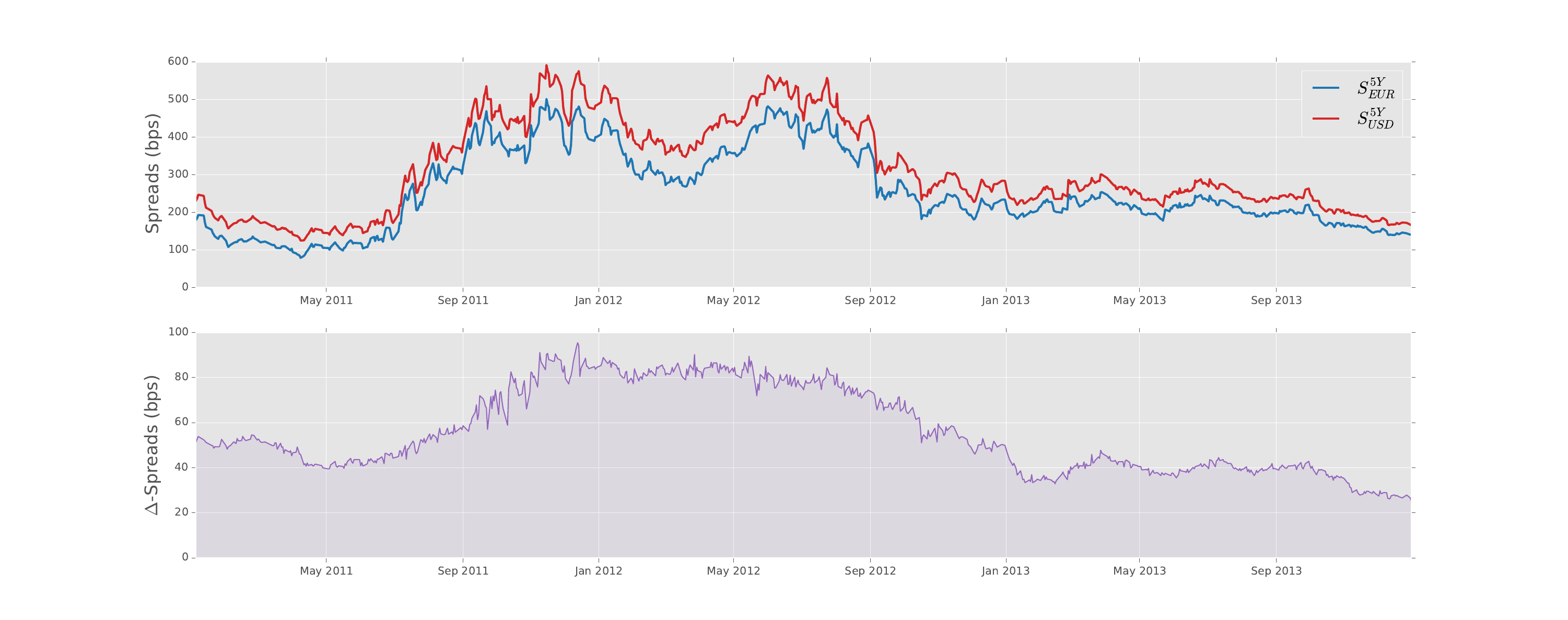}
 \caption{{In the top chart, }5Y par spread time series for USD--denominated CDSs, $\USD{S}^{5Y}$, and EUR--denominated CDSs, $\EUR{S}^{5Y}$, on Italy. The difference between the two par--spreads is {showed in the bottom} chart.}%
\label{fig:histSpreads}
\end{figure}

{
The difference between the par--spreads for USD--denominated and EUR--denominated CDSs is {shown in} the bottom chart. 
In order to build a model which accounts for the default information and  generate the spreads in the two currencies, the joint evolution of the obligors hazard rate and of the FX rate between the two currencies must be modelled.
}

{
In the present paper we  show two ways to model the joint dynamics of credit and FX rate{s}. 
In the first approach  the interaction between the credit and the FX component is entirely explained by an instantaneous correlation between the Brownian motions driving the {stochastic} hazard rate and the FX rate.
In a second{,} more sophisticated modelling approach, a further source of dependence between the two components is introduced in the form of a conditional devaluation {jump} of the FX {r}ate upon default of the reference obligor.
}

{The diffusive approach  emphasizes the limitations of confining  the credit/market interaction to instantaneous correlation between hazard rate and market risk factors. 
As shown {by comparing the model--implied quanto spreads in Figure \ref{fig:rhoSpread} with the observed quanto spreads }{in  Figure \ref{fig:histSpreads}}, instantaneous correlation alone is not able to explain the observed quanto spread. 
This  phenomenon {is} akin to the pricing of  credit correlation  instruments  where it has been observed that instantaneous correlation between hazard rates {is} unable to generate the sufficient  level of dependence  to hit the market spreads of index tranches (see{, for example,} \cite{BMP, BrMerc, cherubini}).
}

{
Using the  latter modelling approach we will  show how the introduction of  jump--to--default effects achieves   a  much stronger FX/Credit dependence  than  correlated Brownian motions. 
In particular{,}  the addition of FX jump{s} allows to recover both the EUR and the USD spreads {(see} the results presented in Section \ref{sec:backtest}{).}
Furthermore, we  show a powerful{, yet simple,} way of  extracting the magnitude of currency devaluation upon default  from the CDS market data ({see }Section {\ref{sec:basktextResults}}).
}

{In addition to multi--currency CDSs{,} the  quanto effect in credit modelling finds a natural application in the \emph{credit valuation adjustment} (CVA) space. 
CVA is a{n} adjustment to the fair value of a derivative contract that {accounts for the expected loss due to the counterparty's default}. 
{We refer the interested reader to \cite{BMP} for a comprehensive overview of CVA modelling and to \cite{cherubiniCollateral} for specific discussions about collateral modelling.}
Modelling the dependence between credit and market risk factor{s} is  crucial  to accurately calculate the  CVA charge. 
One of  the main challenges in calculating CVA is the lack of liquid  CDS market data {to calibrate model parameters}.
The calibration and approximation techniques {showed in this paper} to  connect  currency devaluation with multi--currency CDS par--spreads {can as well be applied to CVA modelling --- for example, to better reflect right--way or wrong--way risk.}
The resulting FX/Credit cross modelling improvement is crucial{,} {especially in those cases} where the interaction between the counterparty credit  and the FX {component}  is strong,  i.e. with emerging market credits and systemically relevant counterparties. 
}

{In Section \ref{sec:genFrame}, we show how the introduction of default--driven FX jumps changes the dynamics of the stochastic hazard rate after a measure change.
This happens because, f}rom a mathematical perspective{,}  the FX rate  is a component of the  Radon--Nikodym derivative that links the risk neutral probability measures associated to two {different }currencies.
{As stated by Girsanov Theorem (see 
\iftoggle{submission}{, for example, \cite{JeanYor}}{Appendix \ref{app:girsanov}}), the dynamics of the {compensated} default process under different risk--neutral measures differ in their drift component.
Such drift depends on the quadratic covariation between the FX rate and the default process (and it is zero when such covariation is null) and can be interpreted as the stochastic hazard rate of the reference entity.
}

{The above result is strongly linked to another aspect of FX rate modelling, which we will refer to as FX symmetry throughout  this document (see the discussion in section \ref{sec:2measures}). 
Consistency between an FX rate process and its reciprocal is not guaranteed under every possible distributional assumptions made on its dynamics. 
For example, in case {of} stochastic volatility FX modelling, {the} reciprocal FX rate  would  not necessarily have the same dynamics that one would expect given that  the reciprocal FX rate is also a Radon--Nikodym derivative. 
For geometric Brownian motions{,} however, this consistency is guaranteed. 
Due to the change in the hazard rate  in the second pricing measure  induced by the   jump--to--default feature of  the FX rate/Radon--Nikodym derivative process, we prove in section \ref{sec:fxSymmetry} that the symmetry is preserved also for {our} specific FX model.}

\subsection{{Previous }{L}iterature }

{We refer to \cite{rutPDE} for an overview of the general problem of deducing {a} PDE to price defaultable claims and to \cite{bieleckiCDShedge} for the specific problem of CDS hedging in a reduced--form framework.}

For an introduction to the joint modelling of credit and FX in a reduced--form framework with application to Quanto--CDS pricing, we refer to \cite{Schon, FxBofa}. 
\cite{Schon} propose the idea to link FX and hazard rate by considering a jump--diffusion model for the FX--rate process where the jump happens at the default time. 
{Differently from the present work,} no explicit derivation of the PDE is presented, as the focus is on affine processes modelling.

 The same idea is presented and developed in \cite{FxBofa}. In that work it is shown how to calculate quanto--corrected survival probabilities using a PDE--based approach. In order to do that, the author deduces a Fokker--Planck equation for the joint distribution of FX and hazard rate. 

The approach we present in Section \ref{sec:indepModel} below is based on the same {Jump--to--Default} framework as the one used in the references above. In our case, however, the calculation of the quanto--corrected survival probabilities depends on solving a Feynman--Kac equation, the solution of which is a price, while in \cite{FxBofa} a probability density distribution was calculated. 
{At implementation level, the difference between the two approaches lies in the fact that in the latter case an additional integration step would be required to calculate a price.}
Additionaly, the way we work out our main pricing equation makes clear what instruments and in what amounts one would need to effectively implement a delta--hedging strategy.


An algorithm using a fixed--point approach has been recently proposed to calculate CVA in \cite{kimleung}.

The techniques showed in this paper seem particularly relevant for long--maturity trades, where the effects of idiosyncratic jump--to-default components on counterparty risk can be more pronounced and where, therefore, they can have a big impact  on wrong ray risk estimation. 
For a relevant example of CVA calculations related to long--maturity trades, we refer to \cite{biffispitotti}, where the cost of CVA and collateralization are calculated for longetivity swaps.

{The use of L\'evy processes with local volatility to price options on defaultable assets has been recently explored in \cite{pascucciDefault}, where a family of asymptotic expansions for the transition density of the underlying is derived.}
{Differently from the approach presented in this paper, in that case a single stochastic process drives both the default intensity and the option's underlying.
On the other hand, being able to account for the implied volatility skew is feature currently missing from the framework presented in Section \ref{sec:indepModel} and that will be explored in future works.}

{
With respect to the Republic of Italy's test case that is presented in the results' section \ref{sec:results}, we note that the Euro--area situation presents interesting problems that go beyond the mere credit--FX interaction  which is the focus of the present work. An additional layer of complexity is provided in this case by the interconnectedness between the credit {risk} of the different currencies. 

Empirically, Germany  quanto CDS basis tends to be more pronounced than the Greece one (see \cite{pykhtinRisk}), reflecting  higher  correlation between EUR/USD and Germany hazard rate of default and higher EUR/USD devaluation upon   Germany default.  

}

%

\subsection{Quanto CDS}\label{sec:mechanics}

Quanto CDS are designed to provide protection upon default of a certain entity in a given currency. There are cases, like for sovereign entities or for systemically important companies, when an investor might prefer to buy protection on a currency other than the one in which the assets of the reference entity are denominated. A typical reason for entering this type of  trades would be to avoid the FX risk linked to the devaluation effect associated to the reference entity's default. 

Alternatively, protection might be needed in a different currency from the one in which the assets of the reference entity are denominated  because it serves as a hedge on a security denominated in that specific currency. 

{The {discounted} cashflows of the premium leg, $\Pi^{\textrm{Premium}}${,} are given {(as seen from the protection seller's perspective)} by}
\begin{equation}\label{eqn:prem}
\Pi^{\textrm{Premium}} = S^c \sum_{i=0}^{N}\1{\tau>T_i}D_0^{ccy}(T_i)
\end{equation}
{where}
\begin{itemize}
\item
{$(T_0,\dots, T_N)$ is the set of quarterly spaced payment times;}
\item
{$D_t^{ccy}(T)$ is {the stochastic discount factor for} currency $ccy$ at time $t$ {for maturity $T$};}
\item
{$S^c$ is the contractual spread;}
\item
{$\tau$ is the default time of the reference entity}.
\end{itemize}
{The protection leg is made of a single cash flow, $\Pi^{\textrm{Protection}}$, paid upon default of the reference entity on a reference obligation}{:}
\begin{equation}\label{eqn:prot}
\Pi^{\textrm{Protection}} = \lgd\1{\tau\leq T_N}  D_0^{ccy}(\tau),
\end{equation}
{where}
\begin{itemize}
\item
{$\lgd$ is the \emph{loss given default} related to the contract.}
\end{itemize} 
{The spread {$S^c$} that makes the expected value of the cash--flows in Eq \eqref{eqn:prem} equal to the expected value of the cash--flow in Eq \eqref{eqn:prot} is referred to as \emph{par--spread} and we will usually use $S$ to denote it.}
{The existence of CDSs on the same reference entity whose premium and protection cashflows are paid in different currencies creates a basis spread between the par--spreads of these contracts.}
{Figure \ref{fig:cdsScheme} provides a schematic representation of two possible contracts settled in two different currencies.}

\begin{figure}
\begin{tikzpicture}[>=stealth,->,shorten >=2pt,looseness=.5,auto]
\matrix [matrix of math nodes,
column sep={3cm,between origins},
row sep={1.6cm,between origins},
nodes={circle, draw, minimum size=7.5mm},
,ampersand replacement=\&]
{
 |(AEUR)| A \& |(BEUR)| B\& |(AUSD)| A \& |(BUSD)| B\\
};
\begin{scope}[every node/.style={font=\small\itshape}]
\draw[blue] (AEUR) to [bend left] node [midway] {$S^{ccy1}(ccy1)$} (BEUR);
\draw[blue,dashed] (BEUR) to [bend left] node [midway] {LGD(ccy1)} (AEUR);
\draw[red] (AUSD) to [bend left] node [midway] {$S^{ccy2}(ccy2)$} (BUSD);
\draw[red,dashed] (BUSD) to [bend left] node [midway] {LGD(ccy2)} (AUSD);
\end{scope}
\end{tikzpicture}
\caption{Protection on a given reference entity can be bought by A from B in different currencies. 
The stream of payments in Eq \eqref{eqn:prem} is indicated by the solid arrow, while the dashed arrow is used for the contingent payment in Eq \eqref{eqn:prot}.
{The $LGD$ payment, albeit settled in different currencies, is the same percentage of the notional in the two contracts.}}
\label{fig:cdsScheme}
\end{figure}
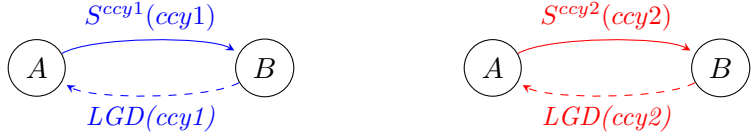

We refer to \cite{jpmqcds} {and references therein} for an overview on quanto CDS market{s} and for a thorough exposition of the rules governing these contracts. We note here that
\begin{itemize}
\item
the standard contracts for sovereign CDS are denominated in USD. This means in particular that for countries of the EUR zone, like Italy, Greece or  Germany, the modeling set up to use when including a devaluation approach is the one detailed in Section \ref{sec:workingSection};
\item
upon default of the reference entity, a common auction sets the loss given default (LGD). The LGD so defined is valid for all the CDSs, irrespectively of the currency they are denominated in. 
\end{itemize}

\subsection{Main {{C}ontribution}}

In this paper, we derive the pricing equations for quanto CDS in different models within the reduced--form framework. In doing so, we show two of the main mechanisms to model dependence between the credit and the FX rate component. 
We will refer to the currency in which the CDSs written on the reference entity are more liquid as to the ``liquid currency'', {that will also define the risk neutral measure used for pricing. 
We will assume that CDSs in a different currency from the liquid one exist and we will refer to this second currency as the ``contractual currency''}. 
In particular, we discuss the mathematical implications of the introduction of a devaluation jump on the spot FX rate between the {contractual} currency and the liquid corrency, both on the pricing equations and on the main risk factors. More in detail:
\begin{enumerate}
\item
in Proposition \ref{prop:one} we show that{,} if we assume for the  FX rate defining the value of one unit of {contractual} currency in the {liquid} currency a dynamics 
\begin{equation}\label{eqn:fxX}
\dif Z_t = \mu^Z Z_t \dif t + \sigma Z_t \dif W_t + \gamma^ZZ_{t-}\dif D_t,\quad Z_0 = z,
\end{equation}
where  $D_t = \1{\tau<t}$ is the default process, then
the hazard rates in the two currencies are linked by
\begin{equation}\label{eqn:lambdaGamma}
 \hat\lambda_t = (1+\gamma^Z)\lambda_t;
\end{equation}
where $\hat \lambda$ is the hazard rate in the measure linked to the {contractual} currency and $\lambda$ is the hazard rate in the currency linked to the {liquid} currency.

{An important corollary of this result is that, in cases where CDS par--spreads can be approximated through the relation $S = (1-R) \lambda$, a similar result holds for par--spreads, too}{:}
\begin{equation}\label{eqn:lambdaGamma}
\hat S = (1 + \gamma^Z) S.
\end{equation}
{We show in Section \ref{sec:results} how such an approximation is applicable to Republic of Italy's par--spreads in the time period ranging from 2011 to 2013;}
\item
in Section \ref{sec:fxSymmetry}	 we show that if we assume for the  FX rate the dynamics given in Eq \eqref{eqn:fxX}, then
\begin{enumerate}[i)]
\item
by no--arbitrage considerations, the drift of $\process Z$ is given by 
\begin{equation*}
\mu^Z= r - \hat r - \gamma^Z \lambda_t(1-D_t);
\end{equation*}
where $r$ is the risk--free rate in the {pricing} measure {linked to the liquid currency} and $\hat r $ is the risk--free rate in the {contractual} measure.
Alternatively, by symmetry considerations,  we could model the {reciprocal} FX rate $X = \sfrac 1 Z$ using the same type of jump--diffusion process 
\begin{equation*}
\dif X_t = \mu^X X_t\dif t - \sigma X_t \dif W_t +  \gamma^XX_{t-}\dif D_t,\quad X_0 = \frac 1 z,
\end{equation*}	
and in this second case we would obtain a drift given by 
\begin{equation*}
\mu^X = \hat r - r- \gamma^X \hat \lambda_t(1-D_t),
\end{equation*}
where
\begin{equation*}
\gamma^X = -\frac{\gamma^Z}{1+\gamma^Z};
\end{equation*}
\item
in Proposition \ref{prop:FX}  we show that the no--arbitrage dynamics implied for $\process X$ is {of the same type as} the no--arbitrage dynamics of $\process Z$. 
This is a result that might not hold in general, for example when stochastic volatility is also included{, or with a price--inhomogeneous local volatility model like CEV}; 
\end{enumerate}
\item
in Proposition \ref{prop:gammaK}  we show an approximated formula, valid for short maturity CDSs,  to estimate the devaluation rate paramenter $\gamma$ and we present numerical results corroborating it in Section \ref{sec:results}.
\end{enumerate}
We study in detail the case of the currency basis spread for CDSs written on Italy in the period 2011--2013 providing, for each day in that time range, the results of the calibration of a model that includes a jump--to--default effect on the FX rate. We show the calibrated parameters and how the calibrated model parameters produce estimates which are consistent with the approximated formula in Eq \eqref{eqn:lambdaGamma}.

\section{Model Description}
 \label{sec:indepModel}
Our modelling framework for credit {risk} falls into the reduced--form approach and, as such, describes not only the evolution of survival probabilities, but also the default event. 

{In Section \ref{sec:ccys} we introduce some definitions concerning the role of different currencies involved in pricing a quanto CDS.}

In Section \ref{sec:2measures} we introduce the general framework that we will refer to to work with two financial markets. In Section \ref{sec:qCDS} we introduce some useful formulae and definitions to price multi currency credit default swaps.

In Section \ref{sec:BK}	we will model a stochastic hazard rate as a {exponential Ornstein--Uhlenbeck} process and the FX rate as a Geometric Brownian Motion (GBM) and we will consider the two driving diffusions to be correlated. 




In Section \ref{sec:genFrame} we present our proposal to embed a factor devaluation approach onto the FX rate dynamics. This provides a way to extend the model shown in Section \ref{sec:BK}.

We {begin by considering} a probability space $(\Omega, \mathcal F, \mathbb Q, \process{\mathcal F})$ satisfying the usual hypotheses. 
In particular $\process{\mathcal F}$ is a filtration under which the dynamics of the risk factors are adapted and under which the default time of the reference entity is a stopping--time. 
Depending on the specific examples, we will also consider spaces  with a different equivalent measure, for example the risk neutral measure associated to the {liquid} {money} market or the risk neutral measure associated to the {contractual currency money} market.

{
We will be {using} {a} Cox process model for the default component and we will refer to the stochastic intensity of the default event simply as \emph{hazard rate} or \emph{intensity}, using the two terms interchangeably.
}

{Unlike} the usual approach followed in the so called  ``reduced--form'' framework for credit risk modelling (see \cite{lando}, \cite{BrMerc}), we do not introduce a second filtration with respect to which only the stochastic processes driving the market risk--factors are measurable\footnote{The total filtration $\process{\mathcal F}$, inclusive of market and default risk, is the only filtration we will consider (that is called $(\mathcal G_t)$ in \cite{BrMerc}). We note that the practical reason for considering this second filtration is because that allows to apply theoretical results developed to price interest rates derivative to credit risk derivatives pricing. Due to the specific model choices we make in the following, however, this would not present any real advantage, while, as shown in sections \ref{sec:bk_pde} and \ref{sec:pEqJumps}, working with a single filtration gives us the  possibility to calculate the quanto adjustment using a PDE approach.}.
\subsection{{The {R}oles of the {C}urrencies}}\label{sec:ccys}
{In this section we set up some definitions concerning the role of the currencies that will be used in our modelling approach.}

{For  of any quanto CDS pricing, we will be considering the following two relevant currencies:}
\begin{itemize}
\item
{\emph{contractual currency} --- This currency is a contract's attribute: it is the currency in which both premium leg and protection leg payments are settled. When considering applications to quanto CDS, for a given reference entity, CDSs are available in at least two different contractual currencies;}
\item
{\emph{liquid currency} --- This is the contractual currency of the most liquidly traded CDS on a given entity. It is used to define a  risk--neutral measure used to price and calibrate the model.}
\end{itemize}


{We list here two examples to illustrate the use of the contractual and liquid currencies. 
}
\begin{enumerate}
\item
{the pricing in USD--measure of a CDS on Republic of Italy settled in EUR;}
\item
{the pricing in USD--measure of a CDS on Republic of Italy settled in USD.}
\end{enumerate}

\begin{table}
\begin{tabular}{  l l l}
\toprule
&Test case 1& Test case 2 \\
\midrule
Contractual currency & EUR& USD\\
Liquid currency & USD & USD\\
\bottomrule
\end{tabular}
\caption{Currencies involved in the priicing of the test cases detailed in Section \ref{sec:ccys}.}\label{tbl:ccys}
\end{table}
{We specified the values of the two currencies for each of these test cases in Table \ref{tbl:ccys}. 
We chose the  test cases so that for all of them USD is the the liquid currency, but this is not necessarily true for {all} CDS available in multiple currencies.  
It is worth noting that the test case 2 can be priced using a usual single currency approach. 
Test cases 1 and 2 will be used in Section \ref{sec:backtest} to illustrate the capability of the model specified in Section  \ref{sec:workingSection} to explain the currency basis observed in the market.}

\subsection{Two {M}arkets {M}easures}\label{sec:2measures}

In this section we summarize known results {about} change of measure in presence of FX effects. This is mostly done to establish notation and set the scene for {the following} original developments. 

Let us consider {the two economies linked to the liquid currency and to the contractual currency, respecively. Let us also} consider the {corresponding} money market account{s} as the numeraire{s} for both the economies. We will use a hat $\hat{\ }$, to denote variables in the {contractual--currency} economy, so that, for example, the two numeraires are $(B_t, t\geq 0)$ for the {liquid--currency} economy and $(\hat B_t, t\geq 0)$ for the {contractual--currency} economy.
The money market account{s'} dynamics are given by
\begin{align}
\dif B_t &= r_tB_t\dif t, \quad  B_0 = 1,\label{eqn:B}\\ 
\dif \hat B_t &= \hat r_t\hat B_t\dif t, \quad  \hat B_0  = 1 \label{eqn:Bhat},
\end{align}
where $\process{r}$ and $\process{\hat r}$ are the stochastic processes describing the short rates in the two economies.

Let us also consider an exchange rate $(X_t, t\geq 0)$ between the currencies of the two economies. 
{$X_t$} is defined {as the price of one unit of the liquid currency expressed as} units of the foreign currency in a spot exchange at time $t$.

We are interested in finding an expression for the Radon--Nikodym derivative that changes the probability measure from $\hat{\mathbb Q}$ to $\mathbb Q$. 
{This} can be worked out by using the Change of Numeraire {technique} and a generic  payoff denominated in the {contractual} currency{, represented by the function {$\hat\phi_T$}}. {To do so, w}e consider{, as said above,} the {contractual--currency} money market account, $\process{\hat B}$,  {as a numeraire for the measure $\hat{\mathbb Q}$, }while for the measure ${\mathbb Q}$ {we still use} the {liquid--currency} money market {account, but with} value  denominated in  the {contractual} currency, $\process{(XB)}$. 
The price of the {contractual} currency payoff $\hat\phi$ {can be expressed in the two measures as:} 
\begin{equation}\label{eqn:chngNum}
 \Ef{t}{\frac{\hat B_t }{\hat B_T } \hat\phi_T} = \E{t}{\frac{B_t X_t}{B_TX_T} \hat\phi_T}.
\end{equation}
{The $\Ef{t}{\cdot}$ expectation on the left--hand side}, on the other hand, can be written as
\begin{equation}\label{eqn:changeNum}
\Ef{t}{\frac{\hat B_t }{\hat B_T }\hat \phi_T} = \Ef{t}{\frac{\hat B_t B_T X_T}{\hat B_T B_t X_t} \frac{B_tX_t}{B_TX_T} \hat\phi_T}
\end{equation}
{and the two expressions above can be used to obtain the Radon--Nikodym derivative that defines the change of measure from $\hat{\mathbb Q}$ to $\mathbb Q$:} 
\begin{equation}\label{eqn:rnDer2}
L_T:=\frac{\dif {\mathbb Q}}{\dif \hat{\mathbb Q}}|_{\mathcal F_T} = \frac{B_TX_T}{B_tX_t}\frac{\hat B_t }{\hat B_T}
\end{equation}

%
In deducing the form of  $\process L$ we started from expected values conditioned on $\mathcal F_t$. 
{Throughout this work, h}owever,  we will mainly be interested {in} expected values conditioned at $\mathcal F_0$ so that for all the applications in the following sections we will be using the formula above with $t = 0$ and $T = t$, namely
\begin{equation}\label{eqn:L}
L_t = \frac{B_t}{\hat B_t}\frac{X_t}{X_0}, \quad L_0 = 1.
\end{equation}
\begin{ass}\label{ass:detRates}
In the following we will be considering deterministic interest rates both for the {liquid--currency} and for the {contractual--currency} economy. 
This means that the money market accounts will be described by
\begin{align}
\dif B_t &= r(t)B_t\dif t, \quad B_0 = 1,\label{eqn:Bdeterm}\\ 
\dif \hat B_t &= \hat r(t)\hat B_t\dif t, \quad \hat B_0 = 1 \label{eqn:BhatDeterm},
\end{align}
in place of \eqref{eqn:B} and \eqref{eqn:Bhat}. To lighten the notation, in most cases we will drop the $t$--dependency for $r(t)$ and $\hat r(t)$ in the following equations.
\end{ass}

The process defined in Eq \eqref{eqn:L} has to be a martingale in the foreign measure. This condition can be used to  determine, together with Assumption \ref{ass:detRates}, the drift of $\process X$. By Ito's {formula}, the dynamics of $\process L$ can be written as
\begin{equation}\label{eqn:rnMartiingale}
\dif L_t = \dif\left( \frac{ B_t}{\hat B_t} \frac{X_t}{X_0}\right) = \frac{B_t}{\hat B_tX_0}(\dif X_t  + r X_t \dif t - \hat r X_t \dif t), \quad L_0 = 1.
\end{equation}
If for example we assume a lognormal dynamics for the FX rate
\begin{equation}\label{eqn:x}
\dif X_t = \mu^X X_t\dif t + \sigma X_t \dif  \hat W_t, \quad X_0 = x_0,
\end{equation}
then asking that $\process L$  in Eq \ref{eqn:rnMartiingale}	is a martingale brings to the {familiar} condition 
\begin{equation}\label{eqn:mu}
\mu^X = \hat r - r.
\end{equation}
\begin{remark}
More generally, the same result holds true in case of a $\process X$ of the type
\begin{equation}
\dif X_t = \mu^X X_t \dif t+ \nu \dif \hat I_t,\quad X_0 = x,
\end{equation}
where $\process{\hat I}$ is a generic $\hat{\mathbb Q}$--martingale.
\end{remark}
An equivalent argument would lead, starting from the {contractual--currency} measure and going to the {liquid--currency} one, to set a drift condition for the process $\process Z$ defined as $Z_t = \frac 1{X_t}$. 
We can define it along the same lines of what was done with $\process X$, as a geometric Brownian motion with a drift to be determined through arbitrage considerations
\begin{equation*}
\dif Z_t = \mu^Z Z_t \dif t + \sigma^Z Z_t \dif W_t, \quad Z_0 = z.
\end{equation*}
The Radon--Nikodym measure in this case would be given by
\begin{equation}\label{eqn:rnDer4}
\hatL_t= \frac{Z_t \hat B_t}{Z_0 B_t},\quad \hatL_0 = 1.
\end{equation}
Requiring that $\process{L^{l\rightarrow c}}$ {has to} be a martingale under the {liquid--currency} measure, would set the drift term as 
\begin{equation}\label{eqn:muZ}
\mu^Z= r -\hat r.
\end{equation}

\begin{remark}[Symmetry]\label{rem:symmetry}

Alternatively, one could deduce the dynamics for $\process Z$ in $\mathbb Q$ starting from $\process X$, whose dynamics is known in $\hat{\mathbb Q}$. 
By applying Ito{'s formula} {to the process given by $Z_t = f(X_t)$ where $f(x) = \sfrac 1 x$},  it would be possible to deduce the dynamics of $\process Z$ in $\hat{\mathbb Q}$. Once its dynamics is known, the form of the driving martingales under $\mathbb Q$ can be worked out using   Girsanov Theorem.  Under the log--normal dynamics chosen for the FX rates, this latter approach and the one starting from the Radon--Nikodym derivative in Eq \eqref{eqn:rnDer4}  lead to the same result. A detailed calculation in case the dynamics of the FX rate is subject also to jump--to--default effect, is presented in Section \ref{sec:fxSymmetry} {below}.
\end{remark}

%

There are cases, for example stochastic volatility FX rate models, where starting from {a different} specification of the FX rate  can make a difference, because the consistency between the arbitrage--free dynamics obtained under the  two different specifications is not guaranteed. In these models, if one starts from $X$ as a primitive modelling quantity, and then implies the distribution of $Z$ at some time $t$  from the law of $X_t$, what will be obtained can be a different distribution from the one that one would have had by starting from $Z$ as a primitive modelling quantity based on the same {dynamical} properties as $X$.

{In applications to quanto CDS pricing, where the FX rate is used in Eq \eqref{eqn:chngNum}},{and where, depending on the circumstances, we might be interested in pricing or calibrating either under the liquid--currency measure or under the contractual--currency measure,} there is a degree of arbitrariness in using one specification 
or the other. Having consistency between the two specifications is a desirable property to avoid  results that depend on the aforementioned choice.

\subsection{Modeling {F}ramework for the Quanto CDS {C}orrection}\label{sec:qCDS}


{In this section we derive model--independent formulas to price contingent claims where contractual currency is different from the {liquid} currency{ used to define the pricing measure}. 
In the next sections we will show the application of these formulas under different dynamics assumptions for the main risk factors.}  

{Let us}  start by calculating the value of a defaultable zero--coupon bond{; it} will be then used as a building block to calculate CDS values.  To do so, we choose a payoff function $\hat\phi_T = \1{\tau>T}${ in} Eq~\eqref{eqn:changeNum} { and write }
\begin{equation}\label{eqn:changeMeasure}
\hat V_t(T) = \Ef{t}{\frac{\hat B_t}{\hat B_T}\ \1{\tau>T} } = \E{t}{\frac{\hat B_t}{\hat B_T}\ \1{\tau>T}\  \frac{\dif \hat{\mathbb Q}}{\dif \mathbb Q} }.
\end{equation}


Using the Radon--Nikodym derivative in \eqref{eqn:rnDer2}, the price of the contingent claim in the {contractual currency} economy {can be calculated by taking the expectation in the liquid currency economy:}
\begin{equation*}
\hat V_t(T) = \frac{B_t}{Z_t} \E{t}{\frac{Z_T}{B_T}\1{\tau>T}}.
\end{equation*}
Under Assumption \ref{ass:detRates} the above can be rewritten as
\begin{equation}\label{eqn:Vhat}
\hat V_t(T) = \frac{B(t,T)}{Z_t}\E{t}{Z_T \1{\tau>T}},
\end{equation}
where $B(t,T) = B_t/B_T$ is the discount factor from time $T$ to time $t\leq T$.

It might be useful\footnote{{Mostly for computational reasons because such definition would easily allow CDS pricers defined for single currency calculations to be re--used for quanto CDS pricing.}} to  define the foreign currency survival probabilities as
\begin{equation}
\hat p_t(T) := \frac{\hat V_t(T)}{\hat B(t,T)}.
\end{equation}

Let us now consider {the price, expressed in liquid currency, of the defaultable zero--coupon bond settled in the contractual currency, $U$.
This is given by:}
\begin{equation}
U_t(T) = \hat V_t(T) {Z_t} = B(t,T)\E{t}{Z_T\1{\tau>T}}. 
\end{equation}
Being the {$\mathbb Q$--}price of a tradable asset, {the drift of the process} $\process U$ has to be {given by $r(t)U_t \dif t$}. Therefore, we can  write a Feynman--Kac equation to calculate $U_t(T)$. Once $U_t(T)$ is known, $\hat p_t(T)$ can be calculated as 
\begin{equation}
\hat p_t(T) = \frac{U_t(T)}{Z_t \hat B(t,T)}.
\end{equation}

\subsection{A diffusive correlation model: {{e}xponential OU / GBM}} \label{sec:BK}
In this section we present a specific model to calculate $U$. 
{W}e will be working with a hazard rate process {and a FX rate process} which {are} defined and calibrated in the {liquid} measure. 

Let us denote by {$\process{\lambda}$}  a stochastic process given by {$ \lambda_t = e^{Y_t}$} where $(Y_t, t \geq 0)$  is an Ornstein--Uhlenbeck process defined as the solution of
\begin{equation}\label{eqn:Y}
\dif Y_t = a(b - Y_t)\dif t + \sigma^Y \dif  W^{(1)}_t,\quad Y_0 = y,
\end{equation}
{where the parameters $(a, b, \sigma^Y, y)\in \mathbb R^+\times\mathbb R^+\times\mathbb R^+\times\mathbb R^+$.}
Let us also consider a GBM process  for the FX rate 
\begin{equation}\label{eqn:lognFX}
\dif Z_t = \mu^Z Z_t \dif t + \sigma^{Z} Z_t \dif W^{(2)}_t\quad Z_0 = z,
\end{equation}
where $\mu^Z$ is set by no arbitrage considerations and it is given in this case by Eq \eqref{eqn:muZ}{, and where $(\sigma^Z, z)\in\mathbb R\times\mathbb R^+$}.

The dependence between FX and credit can be specified in this model th rough the instantaneous correlation {(quadratic covariation)} between the two driving Brownian motions, $\rho\in[-1,1]$,
\begin{equation*}
\mathbb \dif\ \langle  W^{(1)} ,W^{(2)} \rangle_t = \rho \dif t.
\end{equation*}
From the results in Section \ref{sec:2measures}, the FX rate in the opposite direction to $Z$, that is $X = \sfrac 1 Z$ follows a dynamics given by
\begin{equation}
\dif X_t = \mu^X X_t \dif t + \sigma^{X} X_t \dif \hat W^{(2)}_t,\quad X_0 = x,
\end{equation}
with $\mu^X$ given by Eq \eqref{eqn:mu} and $\sigma^X = - \sigma^Z$. 

Let finally $(D_t, t \geq 0 )$ be the default process $D_t = \1{\tau < t}$.


\begin{remark}
Due to the symmetry relation holding for FX rates that are modeled as geometric Brownian motions that was stated in Remark \ref{rem:symmetry}, it does not matter if we choose to model $\process Z$,  or $\process X$, as the two dynamics are consistent.
\end{remark}

\begin{remark}
The choice of the ({exponential OU} and GBM) dynamics  has been mainly driven by the need  for the hazard rate process to stay non negative. However, different hazard rates dynamics, possibly with local volatilities, can easily be accounted for using the same framework presented below as far as they only {driven by} Wiener processes and no jump processes are involved. 
Extensive literature has been produced on the use of square root processes for default intensity, mostly due to their tractability in obtaining closed form solutions for Bonds, CDS and CDS options, see for example \cite{brigo1} and \cite{brigo2}, where exact and closed form calibration to CDS curves is also discussed. 
For the FX rate dynamics, instead, there is no such freedom of choice as the  drift is given by no--arbitrage conditions, and introducing local{ or stochastic} volatilities might break the symmetry relation between the FX rate and its reciprocal. 
\end{remark}

\subsubsection{{Hazard {R}ate's {D}ynamics in the $\hat{\mathbb Q}$ {M}easure}}
We are assuming that the hazard rate process dynamics is known in {${\mathbb Q}$}. {Knowing the Radon--Nikodym derivative between measure {${\mathbb Q}$} and measure {$\hat{\mathbb Q}$} would allow us to write the dynamics of the hazard rate in {$\hat{\mathbb Q}$}. 
That can be obtained by using } Girsanov's Theorem\iftoggle{submission}{}{ (see \ref{theo:girsanov})}{, from which}
\begin{equation}
\dif \hat W^{(1)}_t = \dif W^{(1)}_t - \frac{\dif\ \langle W^{(1)}, Z \rangle_t}{Z_t} = \dif W^{(1)}_t - \rho \sigma^{Z} \dif t
\end{equation}
{so that}
\begin{equation}\label{eqn:Y}
\dif Y_t = a(b - Y_t) \dif t- \sigma^Y \rho \sigma^{Z} \dif t + \sigma^Y \dif  \hat W^{(1)}_t .
\end{equation}

\subsubsection{Pricing Equation}\label{sec:bk_pde}
{In this section we deduce a pricing equation to calculate the value of $U$. We follow the approach used in \cite{rutPDE}.}
Given the strong Markov property of all the processes defined so far, $U_t(T)$ can be expressed as a function of $t$,  $Z_t$, $Y_t$ and $D_t${. L}et{ us} denote its value at $t$ for $Z_t = z$, $Y_t = y$ and $D_t = d$ by  $f(t, z, y, d)$. $f$ is a function depending on both continuous and jump processes, and its Ito differential can be written as (see 
\iftoggle{submission}{, for example, \cite{JeanYor}}{Appendix \ref{app:ito}})
\begin{multline}\label{eqn:itoBK}
\dif f_t = r f\dif t + \partial_tf \dif t+ \partial_z f \left(  \mu^Z z \dif t + \sigma^{Z} z \dif W^{(2)}_t \right) + \partial_y f \left(  a(b - Y_t) \dif t
+ \sigma^Y \dif  W^{(1)}_t  \right) \\
+ \frac 1 2 \left( \sigma^{Z} z\right)^2 \partial_{zz}f \dif t + \frac 1 2 \left( \sigma^{Y} \right)^2 \partial_{yy}f \dif t + \rho \sigma^Z \sigma^Y z \partial_{zy} f \dif t +
\Delta f \dif D_t
,
\end{multline}
where, with some abuse of notation, we have defined the jump--to--default term as
\begin{equation}\label{eqn:Deltaf}
\Delta f := f(t, Z_{t-}+\Delta Z_t, Y_t, D_{t-} + \Delta D_t) - f(t, Z_{t-}, Y_t, D_{t-}).
\end{equation}
A compensator for $\process D$ in the measure $\mathbb Q$ is defined as the  process $\process A$ such that $D_t - A_t$ is a $\mathbb Q$--martingale {with respect to {$\process{\mathcal F}$}. The compensator for $\process D$ is given by (see Lemma 7.4.1.3 in  \cite{JeanYor})
\begin{equation}\label{eqn:compensator}
\dif A_t = \1{\tau>t}\lambda_t\dif t .
\end{equation}
{We define the resulting martingale as $\process M$. It is given by}
\begin{equation}
M_t = D_t - A_t.
\end{equation}
Consequently, the compensator of the last term in Eq \eqref{eqn:itoBK} can be written as 
\begin{equation}
\1{\tau>t} e^{Y_t}\Delta f,
\end{equation}
which, {conditional on $\mathcal F_t$}, $D_t = d$, $Z_t =z$, and $Y_t = y$, is equal to
\begin{equation}
(1-d)e^{y}\left(f(t,z(1 + \gamma^Z),y,1)- f(t,z,y,0)\right)\dif t.
\end{equation}
It is possible to write a Feynman--Kac type PDE to compute the value of $U_t(T)$.  Indeed $\process{U}$  is a {$\mathbb Q$--}price and, as such, {it must locally grow at the rate $r$}. Therefore, its drift must satisfy the following equation
\begin{multline*}
 \partial_tf  
+  \mu^Z z  \partial_z f    +  
 a(b - Y_t) 
 \partial_y f    
+ \frac 1 2 \left( \sigma^{{Z}} z\right)^2 \partial_{zz}f  \\
+ \frac 1 2 \left( \sigma^{Y} \right)^2 \partial_{yy}f  + \rho \sigma^Z \sigma^Y z \partial_{zy} f  + e^{y}(1-d) \Delta f  = 0,
\end{multline*}
where the explicit dependence of $f$ on the state variables $(x,y,t,d)$ has been omitted for clarity of reading.
If it wasn't for the last term, this would be the typical PDE for {default}--free payoffs. Incidentally, this jump--to--default term is also the only term of the equation where the values $f(t, z, y, 0)$ and $f(t, z, y, 1)$ appear together. In fact, by conditioning first on $d = 1$ and then on $d = 0$ we can decouple the two functions 
\begin{align}
u(t,z,y) &:= f(t, (1+\gamma^Z) z, y, 1),\\
v(t,z,y) &:= f(t, z, y, 0)
\end{align}
and calculate them by solving iteratively two separate PDE problems. We first solve for $u$, as for $d = 1$ the last term does not appear in the equation, and, once $u$ has been calculated,  we use it to solve for $v$. Final conditions for the two functions  are respectively given by
\begin{align}
v(T,z,y) = f(T,z,y, 0) &= z;\\
u(T, z, y) = f(T,z, y, 1) & = 0.
\end{align}

The PDE problem that must be solved to obtain $u$ is then given by
\begin{align}\label{pde:BKu}
 \partial_tu  & =
-   \mu^Z z  \partial_z u    - 
 a(b - y)
 \partial_y u    
- \frac 1 2 \left( \sigma^Z x\right)^2 \partial_{zz}u  \nonumber \\
&- \frac 1 2 \left( \sigma^{Y} \right)^2 \partial_{yy}u  - \rho \sigma^Z \sigma^Y z \partial_{zy} u   \\
u(T, z, y) & = 0.
\end{align}
The solution to this problem is $u\equiv 0$, therefore in this case one can  solve directly the PDE for $v$, which is then given by
\begin{subequations}\label{pde:BKv}
\begin{align}
 \partial_tv  & =
-   \mu^Z z  \partial_z v    - 
 a(b - y)
 \partial_y v    
- \frac 1 2 \left( \sigma^Z x\right)^2 \partial_{zz}v  \nonumber \\
&- \frac 1 2 \left( \sigma^{Y} \right)^2 \partial_{yy}v  - \rho \sigma^Z \sigma^Y z \partial_{zy} v  + e^{y}\  v \\
v(T, z, y) & = z.
\end{align}
\end{subequations}


\begin{remark}[Interpretation of $u$ and $v$]
The functions $u$ and $v$ account for the pre--default and post--default value of a derivative with payoff $\phi(x, y, d)$. 
The price of this derivative can be written as
\begin{equation}
V_t =\1{\tau>t-}\mathbb E_t\left[ \phi(X_T,Y_T,D_T) | X_t = x, Y_t = y, D_t = d\right],
\end{equation}
where, due to the strong Markov property of the processes $\process X$, $\process Y$, and $\process D$, the expected value on the right--hand side can be written as
\begin{equation}
f(t,x,y,d) =\mathbb E_t\left[ \phi(X_T,Y_T,D_T) | X_t = x, Y_t = y, D_t = d\right]. 
\end{equation}
This can be decomposed as $f(t,x,y,d) = \1{d = 1}u(t, x,y) + \1{d = 0}v(t, x, y)$ where
\begin{align}
v(t,x, y)&:=\mathbb E_t\left[ \phi(X_T,Y_T,D_T)| X_t = x, Y_t = y, D_t = 0\right],\\
u(t,x,y)&:=\mathbb E_t\left[ \phi(X_T,Y_T,D_T)| X_t = x, Y_t = y, D_t = 1\right], 
\end{align} 
in fact
\begin{align}
f(t,x,y,d) &= \mathbb E_t\left[ \phi(X_T,Y_T,D_T) | X_t = x, Y_t = y, D_t = d\right] \nonumber \\
&= \1{\tau>t} \mathbb E_t\left[ \phi(X_T,Y_T,D_T)| X_t = x, Y_t = y, D_t = 0\right] \nonumber \\
&\quad + \1{\tau\leq t} \mathbb E_t\left[ \phi(X_T,Y_T,D_T)| X_t = x, Y_t = y, D_t = 1\right] \nonumber \\
&= \1{\tau>t}  v(t, x, y) + \1{\tau\leq t}  u(t, x, y)
\end{align}
as both $\1{\tau>t} $ and $\1{\tau\leq t} $ are measurable in the {$\mathcal F_t$} filtration.
The derivative price can then be written as
\begin{equation}
V_t = \1{\tau>t}v(t, X_t, Y_t) + \Delta D_t u(t, X_t, Y_t),
\end{equation}
where we defined
\begin{equation}
\Delta D_t := \1{\tau>t} - \1{\tau>t-}.
\end{equation}
\end{remark}
%

\iftoggle{submission}{}{
\begin{remark}[Deterministic hazard rate]
Let us consider a deterministic hazard rate $H(t)$ and deterministic rates. In this case, we can still apply the methodology from this section to write down a pricing equation for a risky zero--coupon bond. In this simplified case, however, the pricing equation is an ODE rather than a PDE. 
\begin{align}
\dif f(t, d) &= \partial_t f(t, d) \dif t + (f(t, 1) - f(t, 0))\dif D_t \nonumber \\
&= \partial_t f(t, d) \dif t + (f(t, 1) - f(t, 0))\dif M_t +  (1-d)(f(t, 1) - f(t, 0))H(t)\dif t
\end{align}
Asking for $f$ to be a martingale leads to
\begin{equation}
\partial_t f(t, d) +  (1-d)(f(t, 1) - f(t, 0))H(t) = 0,
\end{equation}
which, by defining $u(t):= f(t, 1)$ and $v(t):=f(t,0)$ with final conditions $u(T) = 0$ and $v(T) = 1$ gives
\begin{align}
\partial_t u(t) &= 0\\
u(T)&=0
\end{align}
whose solution is $u(t)\equiv 0$, and
\begin{align}
\partial_t v(t) &= H(t) v(t)\\
v(T) &= 1
\end{align}
whose solution at $t=s \in [0,T]$ is $v(s) = e^{- \int_s^TH(u)\dif u}$.
\end{remark}
}

\subsection{A Jump--to--Default {F}ramework} \label{sec:genFrame}

The exponential OU--based model described in Section \ref{sec:BK} can be extended by incorporating a devaluation mechanism in the FX rate dynamics. By linking the devaluation to the default event, it is possible to introduce a further source of dependence between {$\process \lambda$} and $\process X$. 
In Section \ref{sec:results} it will be shown that this will prove to be a {more suitable} mechanism to model the basis spread for quanto--CDS. 

{This section is organised as follows: in the first subsections, from Section \ref{sec:toolkit} to Section \ref{sec:fxSymmetry}, we will discuss in general how the dynamics of the risk factors are affected by the introduction of a jump--to--default effect on the FX component. Given that the Radon--Nikodym derivative depends on the FX rate, this change is expected to have an impact on all the risk factors whose dynamics has to be written in a measure different from the one in which they have been originally calibrated and, potentially, on the FX symmetry discussed is in Remark \ref{rem:symmetry}. 
{This is proven to hold true also in this new, more general, framework (see Proposition \ref{prop:FX}).}
In Section \ref{sec:pEqJumps} we will apply the general results from the first subsections to the pricing of quanto CDS.}

\subsubsection{{Risk {F}actors {D}ynamics}}\label{sec:toolkit}

Let us then consider a jump--diffusion process for the FX rate in place of \eqref{eqn:lognFX}, while we will be keeping the same model choice for the hazard rate $\lambda_t = e^{Y_t}$:
\begin{align}
\dif Y_t &= a(b - Y_t)\dif t + \sigma^Y \dif  W^{(1)}_t, \quad Y_0 = y, \label{eqn:hr_jumpFX}\\
\dif Z_t &= \bar\mu Z_t \dif t + \sigma^{Z} Z_t \dif  W^{(2)}_t + \gamma^Z Z_{t-} \dif D_t,\quad Z_0 = z, \label{eqn:jumpFX}\\
\dif\ \langle W^{(1)},  W^{(2)}\rangle_t & = \rho \dif t \label{eqn:corr_jumpFX}
\end{align}
{where, as before, the parameters $(a, b, \sigma^Y, y)\in \mathbb R^+\times\mathbb R^+\times\mathbb R^+\times\mathbb R^+$, $(\sigma^Z,z)\in\mathbb R\times\mathbb R^+$, $\rho\in[-1,1]$, and} where $\gamma^Z\in[-1,\infty)$ is the devaluation/revaluation rate of  the FX process. 
The typical case in which this devaluation factor is used is for reference entities whose default can negatively impact the value of their local currency. As an example, we expect the value of EUR expressed in  USD to fall  in case of Italy's default.

We leave unspecified  the drift term of $\process Z$  and we simply use $\bar \mu$ for it in order to distinguish it from $\mu^Z$. It will be shown in Section \ref{sec:fxSymmetry} that the introduction of the jump term will lead to a result different from Eq \eqref{eqn:muZ} if we want  the process defined in Eq \eqref{eqn:rnDer4} to still be a martingale.

\begin{remark}[Jumps]
The jump term in SDE for jump--diffusion processes can be described equivalently using $\process D$ or the compensated process $\process M$, the effect of using one term or the other  being just a change in the drift term. We prefer using the non--compensated term when introducing the FX process in order to highlight the jump structure and hence the additional source of dependence between the FX and the credit component. On the other hand, the description in terms of the compensated martingale $\process M$ will arise naturally every time the Fundamental Theorem of {Asset Pricing} will be used to derive no arbitrage drift conditions, e.g. when Eq \eqref{eqn:rnMartiingale} is used to deduce Eq \eqref{eqn:compMu} {below} and, as it will be shown in Section \ref{sec:pEqJumps}, to deduce the main pricing equation.

\end{remark}

\

\subsubsection{{Hazard {R}ate's and FX {R}ate's {D}ynamics in $\hat{\mathbb Q}$}}\label{sec:girsanovJumps}

Given the dependence of $\process \hatL$ on $\process D$ {via $\process Z$}, in this case the change of measure modifies not only the expected value of $\process{ W}$, but also the expected value of $\process{ M}$ which was originally given by $\dif M_t = \dif D_t - (1-D_t)\lambda_t \dif t$ in ${\mathbb Q}$. However, Girsanov's Theorem provides the adjustments for each of these processes needed to obtain a martingale in the new measure. 
\begin{subequations}
\begin{align}
\dif \hat W_t &= \dif W_t - \frac{\dif\ \langle W, Z\rangle_t}{Z_t} = \dif W_t - \sigma^Z\dif t, \label{eqn:wienerTransform}\\
\dif \hat M_t &= \dif M_t -  (1- D_t)\gamma^Z \lambda_t \dif t.
\end{align}
\end{subequations}
The Wiener process decomposition in $\hat{\mathbb Q}$ is given by the same formula used in Section \ref{sec:BK}, while we derive the martingale decomposition for $\process D$ as a result of the following
\begin{prop}\label{prop:one}
Let $\process{ M}$ be the martingale associated to the default process $\process D$ in the domestic currency measure
\begin{equation*}
\dif M_t  = \dif D_t - (1- D_t) \lambda_t \dif t,
\end{equation*}
then an application of the Girsanov Theorem allows to write the correspondent martingale in the {foreign} measure $\process{\hat M}$ as 
\begin{align}
\dif \hat M_t &= \dif  M_t - \frac{\dif\ \langle M, \hatL\rangle_t}{\hatL_t} = \dif M_t - {\dif\ \langle D,  \gamma^Z D\rangle_t} \nonumber\\
&= \dif M_t -  (1- D_t)\gamma^Z \lambda_t \dif t \\
&= \dif D_t -  (1- D_t)(1+\gamma^Z) \lambda_t \dif t  \label{eqn:newLambda}
\end{align}
where the dynamics of {$\process{\hatL}$} is defined by Eq \eqref{eqn:rnDer4} and Eq \eqref{eqn:jumpFX}.
Eq \eqref{eqn:newLambda} states that the intensity of the Poisson process driving the default event in the foreign currency is given by 
\begin{equation}\label{eqn:hrmc}
\hat\lambda_t:= (1+\gamma^Z) \lambda_t
\end{equation}
\end{prop}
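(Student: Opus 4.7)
The plan is to apply the Girsanov theorem for semimartingales with jumps, using the Radon--Nikodym density $\process{\hatL}$ defined in Eq \eqref{eqn:rnDer4}, together with the FX dynamics in Eq \eqref{eqn:jumpFX}. The central identity I need to establish is the value of the predictable covariation $\langle M, \hatL\rangle$, after which the stated formulas follow by the change-of-measure identity $\dif \hat M_t = \dif M_t - \hatL_t^{-1}\dif \langle M,\hatL\rangle_t$.

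First I would compute the $\mathbb Q$-dynamics of $\hatL_t = Z_t \hat B_t/(Z_0 B_t)$ by Ito's formula. Since $\hat B$ and $B$ are of finite variation and absolutely continuous, they contribute only to the drift; the martingale part of $\hatL$ inherits its structure from $Z$, giving a continuous Brownian piece $\hatL_t \sigma^Z \dif W^{(2)}_t$ and a jump piece $\hatL_{t-}\gamma^Z \dif D_t$. The crucial observation is that $\process M$ is a purely discontinuous martingale with a single jump of size $1$ at $\tau$, so only the discontinuous part of $\hatL$ contributes to $[M,\hatL]$; the Brownian piece drops out.

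Next I would compute the optional quadratic covariation directly from its jump sum: $[M,\hatL]_t = \Delta M_\tau\,\Delta \hatL_\tau\,\1{\tau\le t} = \hatL_{\tau-}\gamma^Z\,\1{\tau\le t}$. Taking the predictable compensator amounts to replacing $\1{\tau\le t}$ by $\int_0^t(1-D_{s-})\lambda_s\,\dif s$, yielding
\begin{equation*}
\langle M,\hatL\rangle_t = \int_0^t \hatL_{s-}\,\gamma^Z\,(1-D_{s-})\lambda_s\,\dif s,
\end{equation*}
which is exactly the heuristic $\dif\langle D,\gamma^Z D\rangle_t$ written in the statement (up to the factor $\hatL_{s-}$). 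Dividing by $\hatL_t$ and substituting into the Girsanov formula gives $\dif \hat M_t = \dif D_t - (1+\gamma^Z)(1-D_t)\lambda_t\,\dif t$, from which Eq \eqref{eqn:hrmc} is immediate: the $\hat{\mathbb Q}$-compensator has intensity $\hat\lambda_t=(1+\gamma^Z)\lambda_t$.

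The main obstacle I expect is rigorously handling the jump quadratic covariation, in particular (i) confirming that the continuous martingale part of $\hatL$ makes no contribution to $\langle M,\hatL\rangle$ because $M$ is purely discontinuous, and (ii) passing cleanly from the optional covariation $[M,\hatL]$ to the predictable covariation $\langle M,\hatL\rangle$ using the compensator of $\1{\tau\le\cdot}$ from Eq \eqref{eqn:compensator}. A secondary technical point is the $\hatL_{t-}$ versus $\hatL_t$ issue in the ratio $\dif\langle M,\hatL\rangle_t/\hatL_t$: because $\hatL$ is right-continuous and the integrand $(1-D_{s-})\lambda_s$ vanishes on $\{s>\tau\}$, the predictable and right-continuous versions coincide $\dif s$-a.e.~on the support of the integrator, so the simplification in the statement is justified.
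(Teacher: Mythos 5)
Your argument is correct, and it reaches Eq \eqref{eqn:newLambda} by a different mechanism than the paper. You apply the Girsanov decomposition constructively: you isolate the jump part of $\hatL$ (the Brownian and drift pieces of $Z$, $B$, $\hat B$ drop out because $M$ is purely discontinuous), compute the optional covariation $[M,\hatL]_t=\gamma^Z\hatL_{\tau-}\1{\tau\le t}$, compensate it using Eq \eqref{eqn:compensator} with the predictable factor $\hatL_{s-}$ carried along, and divide by $\hatL$, with the $\hatL_{t-}$ versus $\hatL_t$ discrepancy correctly dismissed as a $\dif t$-null issue. This is in effect a rigorous filling-in of the computation only sketched in the proposition's displayed chain (where the bracket is written loosely as $\dif\langle D,\gamma^Z D\rangle_t$ without the $\hatL_{s-}$ normalisation you note). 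The paper's proof instead proceeds by guess-and-verify: it takes the candidate $\hat M_t = D_t - \int_0^t(1-D_s)(1+\gamma^Z)\lambda_s\dif s$, applies integration by parts to $\hat M_t\hatL_t$, observes that the only interaction term is the same jump product $\gamma^Z\hatL_{t-}\dif D_t$, and concludes that $\hat M\hatL$ is a sum of stochastic integrals against the $\mathbb Q$-local martingales $M$ and $\hatL$, hence $\hat M$ is a $\hat{\mathbb Q}$-local martingale by the product characterisation of the change of measure. Your route derives the new compensator rather than verifying it, at the cost of being careful about optional versus predictable brackets and local integrability of $[M,\hatL]$ (which you assert but could justify in one line since $\hatL$ is a martingale and $\hatL_{\tau-}$ is locally bounded); the paper's route avoids Girsanov's formula entirely and is shorter, but presupposes knowledge of the answer. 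Both hinge on the same single fact, namely that the only credit--FX interaction in $\hatL$ is the jump $\Delta\hatL_\tau=\gamma^Z\hatL_{\tau-}$.
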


\begin{proof}
Integration by parts  gives
\begin{align*}
\dif\ (\hat M_t \hatL_t) & = \hatL_t \dif \hat M_t + \hat M_t \dif \hatL_t + \dif\ [\hat M,\hatL]_t \\
& = \hatL_t \dif \hat M_t + \hat M_t \dif \hatL_t + \gamma ^Z \hatL_t \dif D_t \\
& = \hatL_t (\dif M_t -  (1- D_t)\gamma^Z \lambda_t \dif t )+ \hat M_t \dif\hat  L_t + \gamma^Z \hatL_t \dif D_t \\
& = \hatL_t \dif M_t + \hat M_t \dif \hatL_t + \gamma^Z \hatL_t \dif M_t
\end{align*}
so the process $\process{(\hatL\hat M)}$ is a martingale in the domestic measure as it can be written as a sum of stochastic integrals on local martingales. As  a consequence, the process $\process{\hat M}$ is a local martingale in the foreign measure.
\end{proof}

\begin{remark}[CDS par--spreads approximation]\label{rem:approx}
In all the cases where the well known approximation 
\begin{equation}\label{eqn:trnglAppr}
\lambda \approx \frac{S}{1-R}
\end{equation}
between hazard rates, CDS par--spreads, $S$, and recovery rates, $R$, holds, the relation in Eq \eqref{eqn:hrmc} can be written in terms of CDS par--spreads rather than hazard rates as
\begin{equation}\label{eqn:parSPrApprox}
\hat S= (1+\gamma^Z) S.
\end{equation}
This happens, for example, where the hazard rate is constant in time and when the  premium leg's cash-flows can be approximated by a stream of continuously compounded payments{ (see \cite{BrMerc})}. 
\end{remark}

\subsubsection{Hazard Rates {D}ynamics in the {T}wo {M}easures}

As shown by Proposition \ref{prop:one}, the hazard rate's magnitude changes depending on whether we are pricing a contingent claim in $\hat{\mathbb Q}$ or $\mathbb Q$. 

If we still consider a{n exponential OU} model for the evolution of the hazard rate, the relation obtained in Proposition \ref{prop:one}, {$\hat\lambda_t = (1+\gamma^Z) \lambda_t$} can be translated in terms of the driving processes $\process Y$ and $\process{\hat Y}$ as
\begin{equation*}
Y_t = \log\left( \frac{e^{\hat Y_t}}{1+\gamma^Z}\right)
\end{equation*}
from which 
\begin{equation}\label{eqn:dY}
\dif Y_t = { \dif \hat Y_t}.
\end{equation}
This result {could} be useful when writing the pricing PDE, because the price {could} be calculated as an expectation in the {domestic} measure, while {the} set of stochastic processes {might be} defined in the foreign measure.

\subsubsection{FX {R}ates {D}ynamics in the {T}wo {M}easures and {S}ymmetry}\label{sec:fxSymmetry}

The FX rate in this model is  a jump--diffusion process, whose jumps are given by (see Eq \eqref{eqn:jumpFX})
\begin{equation}
\Delta Z_t = \gamma^Z Z_{t-} \Delta D_t.
\end{equation}
Notice that also this specification of the FX rate is subject to  arbitrage constraints {such that the Radon-Nikodym derivative defined by Eq \eqref{eqn:rnDer4} {be} a martingale}. The condition equivalent to Eq \eqref{eqn:muZ} in the case where the FX dynamics is given by Eq \eqref{eqn:jumpFX} is provided by
\begin{equation}\label{eqn:compMu}
\bar \mu = \mu^Z -\lambda_t \gamma^Z \1{\tau>t} = r - \hat r  - \lambda_t \gamma^Z \1{\tau>t}.
\end{equation}

Despite the introduction of the jump in the FX rate dynamics, the consistency {highlighted in Remark \ref{rem:symmetry}} between $\process X$ and $\process Z$ is maintained. 
From a practical point of view {this} means that we do not need to worry about which FX rate we use, as  one can be  obtained as a transformation of the first one and it is guaranteed to satisfy the no-arbitrage relations for the associated Radon--Nikodym derivative. This is proved in the next 
\begin{prop}[FX rates symmetry under devaluation jump to default]\label{prop:FX}
Let us consider an FX rate process whose dynamics in the domestic measure ${\mathbb Q}$ is specified by Eq \eqref{eqn:jumpFX} and whose  drift is given by Eq \eqref{eqn:compMu}. Then the dynamics of the process $\process X$ where $X_t = \sfrac 1 {Z_t}$ in the foregin measure $\hat{\mathbb Q}$ is given by
\begin{equation}\label{prop:Z}
\dif X_t = (\hat r - r)X_t \dif t - \sigma^ZX_t\dif  \hat W_t^{(2)}  +X_{t-}\gamma^X  \dif \hat M_t, \quad X_0 = \frac 1 z,
\end{equation}
where the devaluation rate for $\process X$ is given by
\begin{equation}\label{eqn:gammaPrimo}
\gamma^X = -\frac{\gamma^Z}{1+\gamma^Z}.
\end{equation}
In particular, \eqref{prop:Z} is such that the Radon--Nikodym derivative defined by Eq \eqref{eqn:L} is a $\hat{\mathbb Q}$-martingale.
\end{prop}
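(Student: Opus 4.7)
The plan is to compute the $\hat{\mathbb Q}$-dynamics of $X_t = 1/Z_t$ in two steps: first apply Itô's formula for jump-diffusions in the original measure $\mathbb Q$, and then change measure to $\hat{\mathbb Q}$ using the Girsanov decompositions already established in Section \ref{sec:girsanovJumps}. The martingale property of the Radon--Nikodym derivative will then be a short corollary.

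First, I would apply Itô's formula to $f(z) = 1/z$. The continuous part yields a drift contribution $-\bar\mu X_t + (\sigma^Z)^2 X_t$ and a diffusion $-\sigma^Z X_t \dif W^{(2)}_t$. For the jump part, a direct computation gives
\begin{equation*}
\Delta X_t = \frac{1}{Z_{t-}(1+\gamma^Z)} - \frac{1}{Z_{t-}} = X_{t-}\left(\frac{1}{1+\gamma^Z}-1\right) = X_{t-}\gamma^X,
\end{equation*}
which identifies $\gamma^X = -\gamma^Z/(1+\gamma^Z)$ as stated. Plugging in the no-arbitrage drift $\bar\mu = r-\hat r - \gamma^Z\lambda_t(1-D_t)$ from Eq \eqref{eqn:compMu}, the $\mathbb Q$-dynamics becomes
\begin{equation*}
\dif X_t = X_t\bigl[\hat r - r + \gamma^Z\lambda_t(1-D_t) + (\sigma^Z)^2\bigr]\dif t - \sigma^Z X_t\dif W^{(2)}_t + X_{t-}\gamma^X\dif D_t.
\end{equation*}

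Next I would switch to $\hat{\mathbb Q}$ using $\dif W^{(2)}_t = \dif \hat W^{(2)}_t + \sigma^Z\dif t$ from Eq \eqref{eqn:wienerTransform} and the relation $\dif D_t = \dif\hat M_t + (1-D_t)(1+\gamma^Z)\lambda_t\dif t$ obtained by combining Proposition \ref{prop:one} with the definition of $M$. The Brownian change of measure absorbs the $(\sigma^Z)^2$ term in the drift. For the jump part, the key algebraic identity is $\gamma^X(1+\gamma^Z) = -\gamma^Z$, so the additional drift produced by the compensator of $D$ under $\hat{\mathbb Q}$ is exactly $-X_t\gamma^Z\lambda_t(1-D_t)\dif t$, which cancels the surviving jump-compensator term in the $\mathbb Q$-drift. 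What remains is precisely Eq \eqref{prop:Z}.

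Finally, for the martingale claim about $L_t = B_t X_t/(\hat B_t X_0)$, I would note that $B_t/\hat B_t$ is of finite variation with derivative $(r-\hat r)B_t/\hat B_t$, so a one-line integration-by-parts with the $\hat{\mathbb Q}$-dynamics just derived eliminates the drift and leaves
\begin{equation*}
\dif L_t = \frac{B_t}{\hat B_t X_0}\bigl(-\sigma^Z X_t\dif \hat W^{(2)}_t + X_{t-}\gamma^X\dif \hat M_t\bigr),
\end{equation*}
i.e.\ a sum of stochastic integrals against $\hat{\mathbb Q}$-local martingales. The main obstacle, as so often with jump-diffusion changes of measure, is the careful bookkeeping of the drift: three separate corrections (the Itô $(\sigma^Z)^2$ term, the Girsanov shift of $W^{(2)}$, and the Girsanov-induced rescaling $\lambda\mapsto\hat\lambda$ of the default compensator) must combine to leave only the pure interest-rate differential $\hat r-r$, and it is the identity $\gamma^X(1+\gamma^Z)=-\gamma^Z$ that makes the jump-side cancellation work.
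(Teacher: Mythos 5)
Your proof is correct and follows essentially the same route as the paper's: It\^o's formula applied to $1/Z$ under $\mathbb Q$, identification of the jump size $\gamma^X=-\gamma^Z/(1+\gamma^Z)$, and then the Girsanov substitutions for $W^{(2)}$ and for the default compensator ($\lambda\mapsto(1+\gamma^Z)\lambda$), with the cancellation driven by $\gamma^X(1+\gamma^Z)=-\gamma^Z$. Your explicit final step showing that $L_t = B_tX_t/(\hat B_tX_0)$ becomes a sum of stochastic integrals against $\hat{\mathbb Q}$-local martingales is a small addition the paper leaves implicit, and is a welcome completion of the statement's last claim.
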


\begin{proof}

See Appendix \ref{app:proofFX}
\end{proof}

Alternatively, a representation where the jumps are highlighted can be used  for the $\hat{\mathbb Q}$-dynamics of $\process X$
\begin{equation}\label{eqn:zJump}
\dif X_t = \left(\hat r - r - (1-D_t)\gamma^X {\lambda_t}\right) X_t \dif t - \sigma^ZX_t\dif  W_t^{(2)}  +X_{t-}\gamma^X  \dif D_t ,  \quad X_0 = \frac 1 z.
\end{equation}

\subsubsection{Pricing Equation}\label{sec:pEqJumps}
\label{sec:workingSection}
In this section, we consider the case where {liquid} currency and pricing currency coincide and are different {from} the {contractual} currency. As discussed in Section \ref{sec:mechanics}, this is the typical setup arising to price {in the USD--market measure} CDSs written on European Monetary Union countries, as the standard currency for them is USD. If one wants to price a EUR denominated contract for such reference entities in the USD measure, one has first to calibrate the hazard rate to USD--denominated contracts and then the pricing can be carried out using the equations derived in this section. This is {also} the procedure followed to produce the results showed in Section \ref{sec:backtest} {below}.

Without loss of generality, we will {study} the case {of} liquid currency {and pricing currency associated} to {the domestic measure} $\mathbb Q$.
\begin{align}
\dif Y_t &= a(b - Y_t)\dif t + \sigma^Y \dif W^{(1)}_t, \label{eqn:yEUR} \\
\dif Z_t &= \bar\mu_Z Z_t \dif t + \sigma^ZZ_t \dif W^{(2)}_t + \gamma^Z Z_t \dif D_t \label{eqn:zEUR}\\
\dif\ \langle W^{(1)}, W^{(2)}\rangle_t & = \rho \dif t \label{eqn:rhoEUR}
\end{align}
{w}ith 
\begin{equation}
\dif M_t = \dif D_t - (1-D_t)\lambda_t \dif t
\end{equation}
so that the no-arbitrage drift is given by (see Eq \eqref{eqn:compMu})
\begin{equation}
\bar\mu_Z = r - \hat r - \gamma^Z (1 - D_t) \lambda_t.
\end{equation}
{An application of the generalized Ito formula (see\iftoggle{submission}{, for example, \cite{JeanYor}}{Appendix \ref{app:ito}}) allows us to write the $\mathbb Q$--dynamics of $\process{U}$. Using $U_t = f(t, Z_t, Y_t, D_t)$:}
\begin{multline*}
\dif f = rf \dif t+ \partial_tf \dif t+ \partial_z f \left( \bar \mu_Z z \dif t + \sigma^Z z \dif W^{(2)}_t + \gamma^Z z \dif D_t \right) + \partial_y f \left(  a(b - Y_t) \dif t+ \sigma^Y \dif  W^{(1)}_t  \right) \\
+ \frac 1 2 \left( \sigma^Z z\right)^2 \partial_{zz}f \dif t +  \frac{1}{2}\left( \sigma^{Y} \right)^2 \partial_{yy}f \dif t + \rho \sigma^Z \sigma^Y z \partial_{zy} f \dif t + \Delta f \dif D_t - \partial_z f \Delta Z_t.
\end{multline*}
{The pricing equation could be deduced by the $f$ dynamics in the same way discussed in Section \ref{sec:bk_pde}:}
\begin{subequations}\label{pde:generalUSD}
\begin{align}
 \partial_tv  & =
-    (r - \hat r) z  \partial_z v    -  a(b - y) \partial_y v    
- \frac 1 2 \left( \sigma^Z z\right)^2 \partial_{zz}v  \nonumber \\
&- \frac 1 2 \left( \sigma^{Y} \right)^2 \partial_{yy}v  - \rho \sigma^Z \sigma^Y z \partial_{zy} v  + e^{y}\ ( v - \gamma^Z z \partial_z v)\\
v(T, z, y) & = z.
\end{align}
\end{subequations}

\subsubsection{Inferring {D}efault {P}robability {D}evaluation {F}actor from the FX {R}ate {D}evaluation {F}actor} \label{sec:kImpl}
It is possible to link the FX rate devaluation factor introduced in \eqref{eqn:jumpFX} with a probability {rescaling} factor. 
This is done in the following
\begin{prop}[Default probabilities devaluation]\label{prop:gammaK}
Under the  hypotheses  of 
\begin{enumerate}[i)]
\item
small tenors:
\begin{equation}
T \rightarrow 0,
\end{equation}
\item
\added{independence between the Brownian motions driving  the FX and hazard rate processes:}
\begin{equation}
\rho =  0,
\end{equation}
\end{enumerate}
the ratio of the quanto-corrected and single-currency default probabilities can be approximated through
\begin{equation}\label{eqn:gamma2k}
\frac{1- \hat p_0(T) }{1- p_0(T) } \approx1 + \gamma^Z.
\end{equation}

\end{prop}
\begin{proof}
See Appendi{x} \ref{app:proof_gammaK}.
\end{proof}


\section{Results}\label{sec:results}

\subsection{{Numerical {M}ethods}}

{
In order to produce the results presented in this section, the PDE--system\eqref{pde:generalUSD} {has} been solved numerically, both for direct calculations of quanto--adjusted survival probabilities and for the calibration problems described later in the paper in Section \ref{sec:marketData}.
}

{
For this purpose, we implemented a finite--difference method belonging to the family of \emph{alternating--direction implicit} (ADI) schemes. 
The description of the scheme that has been  used can be found in \cite{ADI}. 
It must be noted that {the PDE system} \eqref{pde:generalUSD} {consists of} a pricing PDE and {of} a terminal condition.
In order to apply the chosen scheme to such PDE systems, we also have to specify boundary conditions. 
{For this purpose, we chose to use} neither Neumann nor Dirichlet conditions --- rather, the second derivative of the solution was set to zero on the boundaries.
}

\iftoggle{submission}{}{
\subsection{Monte Carlo {vs PDE for {D}efultable {B}ond {P}ricing {C}omparison}} 
We tested the PDE-based exponential OU implementation against a Monte Carlo one. To do so, we simulated a  hazard rate process {$\process{\lambda}$} given by $\lambda_t = e^Y_t$ with $\dif Y = a(b - Y_t)\dif t + \sigma\dif W_t$ where we used {a set of} parameters {such to produce a flat CDS par--spread term--structure at the level of 100 bps:}
\begin{align*}
a &= 0.08,&
b &= 3.7,&
\sigma &= 0.2,&
Y_0 &= -5.
\end{align*}
The values of the parameters above and the ones that we will be showing throughout the next sections are always expressed as annualized quantities.
We calculated numerically a domestic survival probability 
\begin{equation}\label{def:survProb}
p_0(T) = \E{0}{e^{-\int_0^T\lambda_s\dif s}}
\end{equation}
for $T=5Y$ and we reported the results in Figures \ref{fig:mctesta} and \ref{fig:mctestb}. The left-hand chart shows how, increasing the number of time steps used in MC simulation/PDE--grid, the MonteCarlo 95\%--confidence interval is moved up until it includes the PDE solution. As shown in Figure \ref{fig:mctesta}, this happened when the 5Y time horizon was sampled with at least 300 points. Figure \ref{fig:mctestb} shows how, fixing the number of time steps to 500, the 95\%--confidence interval is made smaller and smaller by increasing the number of MC paths but with the PDE--solution always lying inside of it.
\begin{figure}
\begin{floatrow}
\floatbox{figure}[.45\textwidth]
{\includegraphics[width = .45\textwidth]{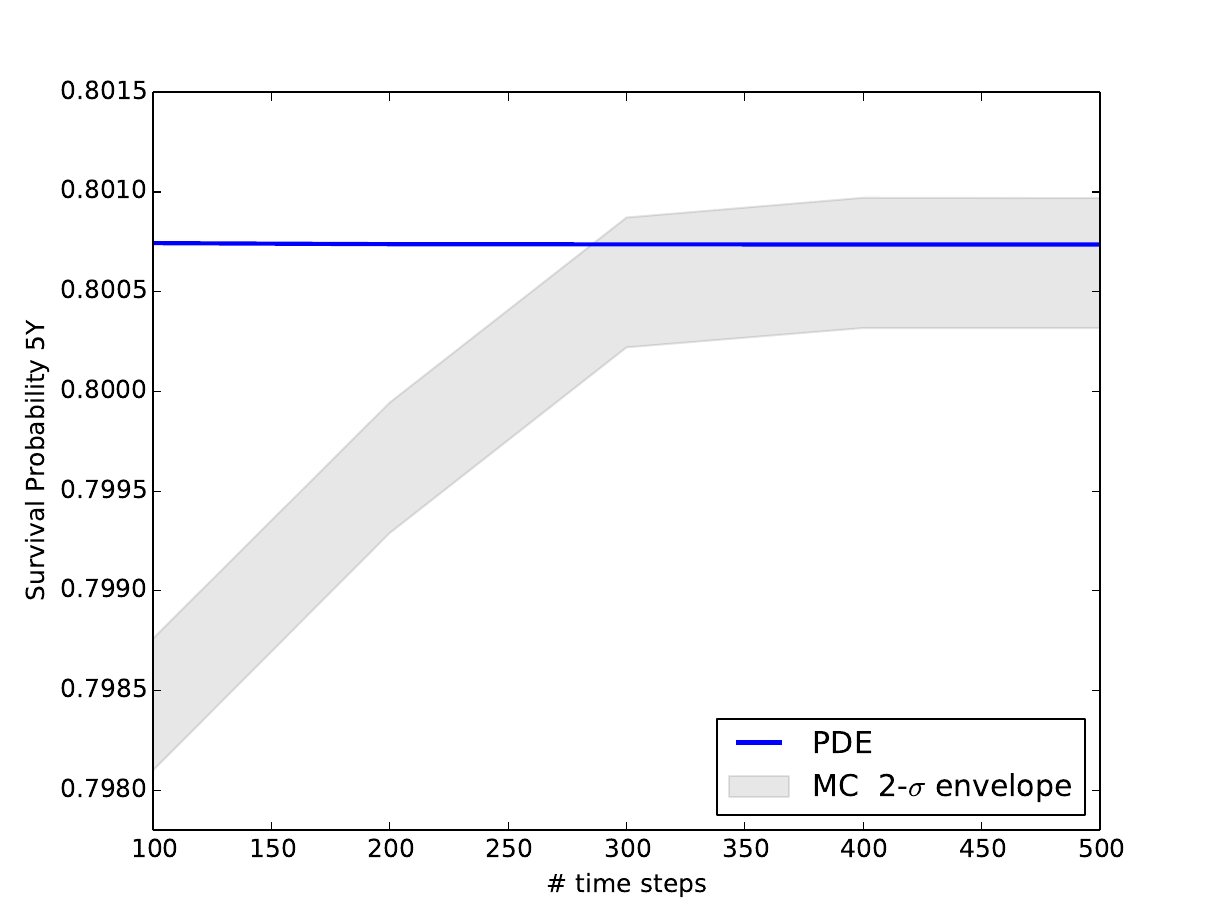}}
{%
  \caption{Comparison between PDE-based calculation of \eqref{def:survProb} and a MonteCarlo based one. The number of time steps used For both the methods is reported in the horizontal axis. The PDE-grid has been discretized with 200 points along the x-axis, while for MonteCarlo 100,000 paths have been used.}%
\label{fig:mctesta}
}
\floatbox{figure}[.45\textwidth]
{\includegraphics[width = .45\textwidth]{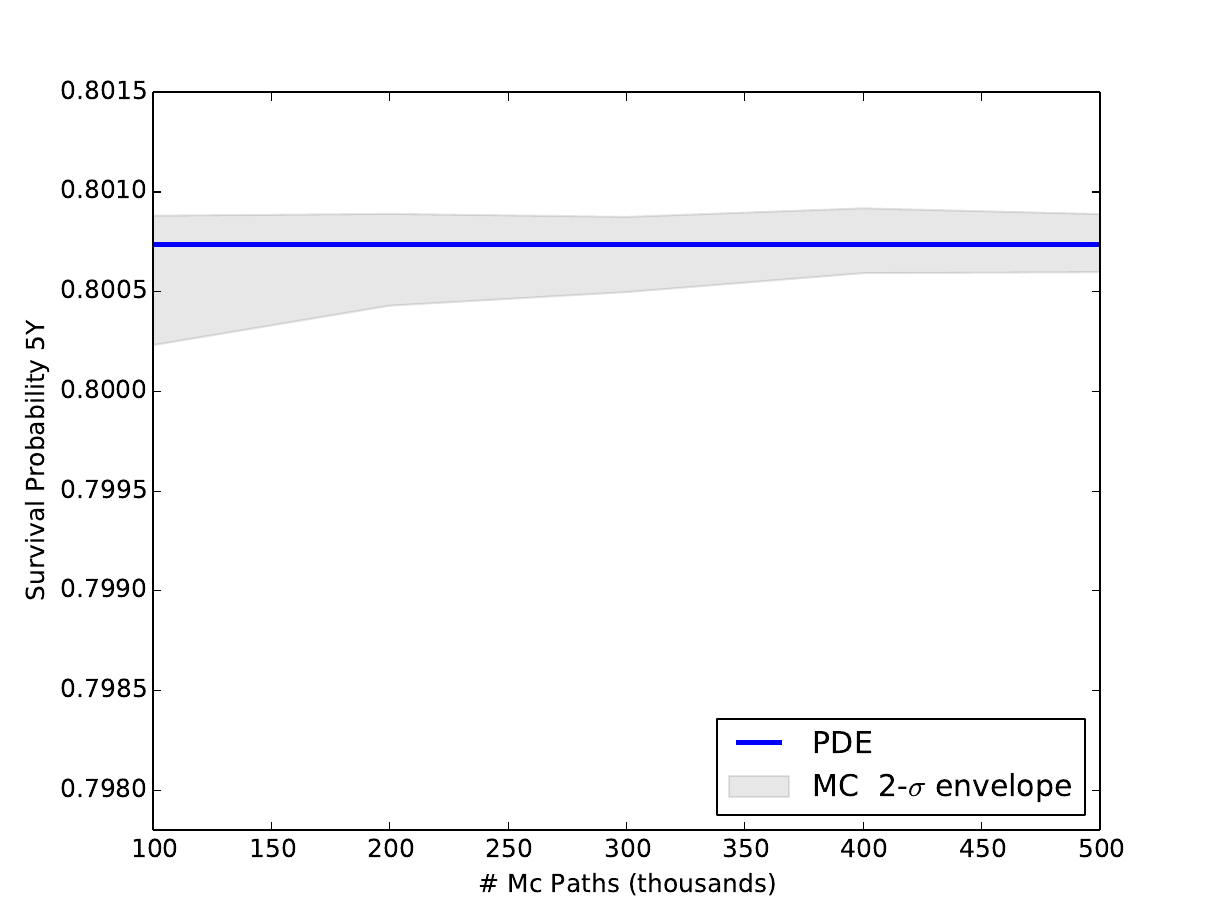}}
{%
  \caption{Same comparison of Figure \ref{fig:mctesta} where the number of time steps is fixed at 500 and the PDE-grid x-discretization to 200.}%
\label{fig:mctestb}
}
\end{floatrow}
\end{figure}
}

\subsection{{Quanto CDS {{P}ar--}{S}preads}  {P}arameters {D}ependence} \label{sec:parDependence}

In this section we show how the quanto-corrected {CDS par--spreads} are affected by changing the value of some of the parameters. Specifically

\begin{itemize}
\item
we show in Figure \ref{fig:yayb}  the dependence of {CDS} par--spread on the values of $\rho$ and $\gamma^Z$. 
\item
we show in Figure \ref{fig:sigmaFX} the dependence of {CDS} par--spreads on the value of $\sigma^{Z}$ for different values of $\sigma^Y$. For the ranges of values chosen, a stronger dependence is showed on $\sigma^Y$ than on $\sigma^{FX}$;
\item
we show in Figure \ref{fig:rhoSpread} the dependence of {CDS} par--spreads on the value of $\rho$ for different values of $\sigma^Y$. In particular, we show how the impact of correlation increases with $\sigma^Y$.
\end{itemize}


%
%
%

The parameter which affected the most the value of the spreads in this analysis is{, as one expects,} the devaluation rate, $\gamma^Z$ (see Figure \ref{fig:yayb}). 
For the {chosen} value of the parameters, a change in the instantaneous correlation from its extreme values, $-1$ and 1, can usually move the par spread of less than 10bps, while moving the devaluation rate to its extreme value, 1, can bring to zero the level of the par spread.

\begin{figure}
\begin{floatrow}
\floatbox{figure}[\textwidth]
{\includegraphics[width = \textwidth]{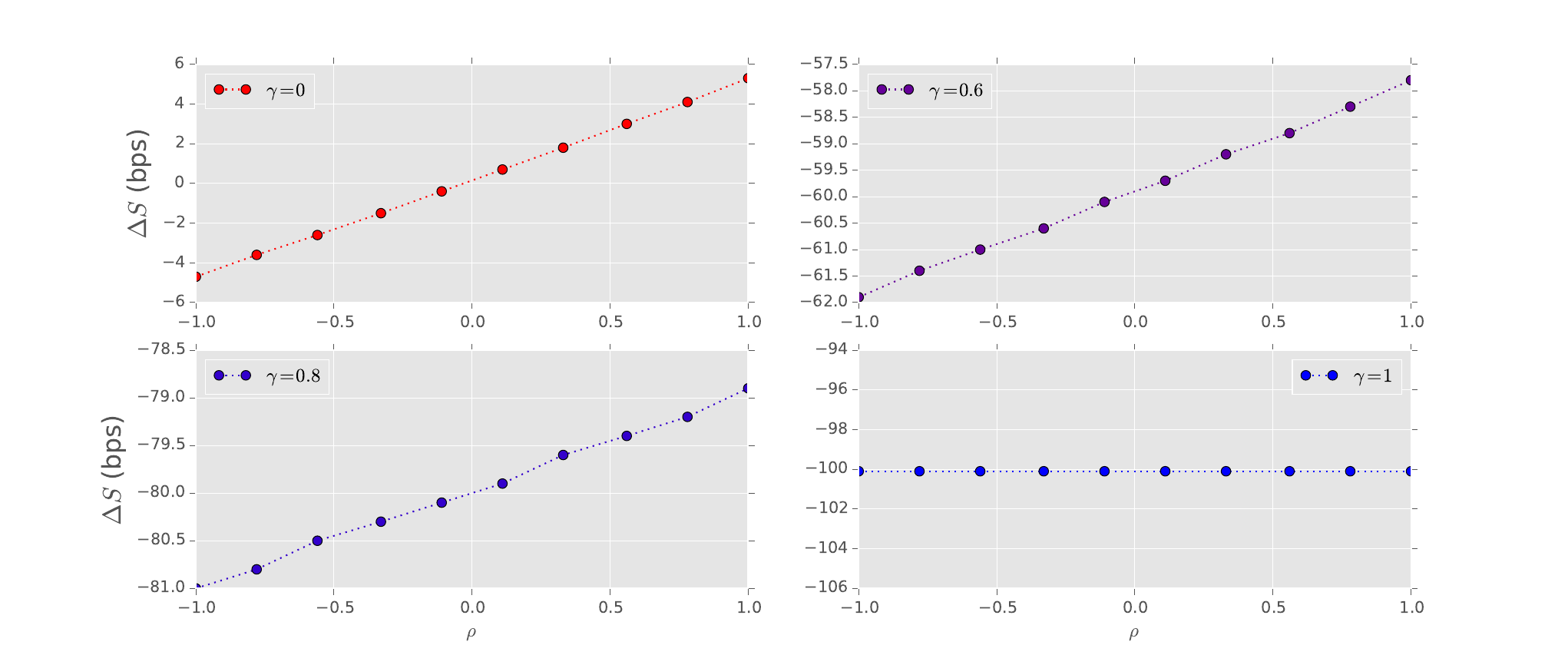}}
{%
  \caption{5Y CDS par--spread impact {vs} $\rho$ and $\gamma$. {The reference value for the par--spread is calculated using the parameters' values in Table \ref{tbl:yayb}.}}%
\label{fig:yayb}
}
\end{floatrow}
\end{figure}

\begin{table}
\centering
\begin{tabular}{ r@{.}l r@{.}l r@{.}l r@{.}l r@{.}l r@{.}l r@{.}l r@{.}l }
\toprule
\multicolumn{2}{ c }{$z$}&\multicolumn{2}{ c }{$\mu$}&\multicolumn{2}{ c }{$\sigma^Z$}&\multicolumn{2}{ c }{$a$}&\multicolumn{2}{ c }{$b$}&\multicolumn{2}{ c }{$y$}&\multicolumn{2}{ c }{$\sigma^Y$}&\multicolumn{2}{ c }{$T$}\\
\midrule
0&8&0&0&0&1&0&0001&-210&0&-4&089&0&2&5&0\\
\bottomrule
\end{tabular}
\caption{Parameters used to produce the par--spreads impact in Figure \ref{fig:yayb}}\label{tbl:yayb}
\end{table}

Figure \ref{fig:sigmaFX} shows that par--spreads' sensitivity to the volatility of the FX rate process is slightly weaker than the one to the log-hazard rate's volatility for the chosen ranges of parameters' values. {In our example, a} 5Y par--spread can change of around 10 bps with $\sigma^Z$ ranging from 1\% to 20\%,  while it can range up to 30 bps with $\sigma^Y$ going from 20\% to 70\% and with $\sigma^{Z}$ fixed at 20\%.

\begin{figure}
\begin{floatrow}
\floatbox{figure}[\textwidth]
{\includegraphics[width = \textwidth]{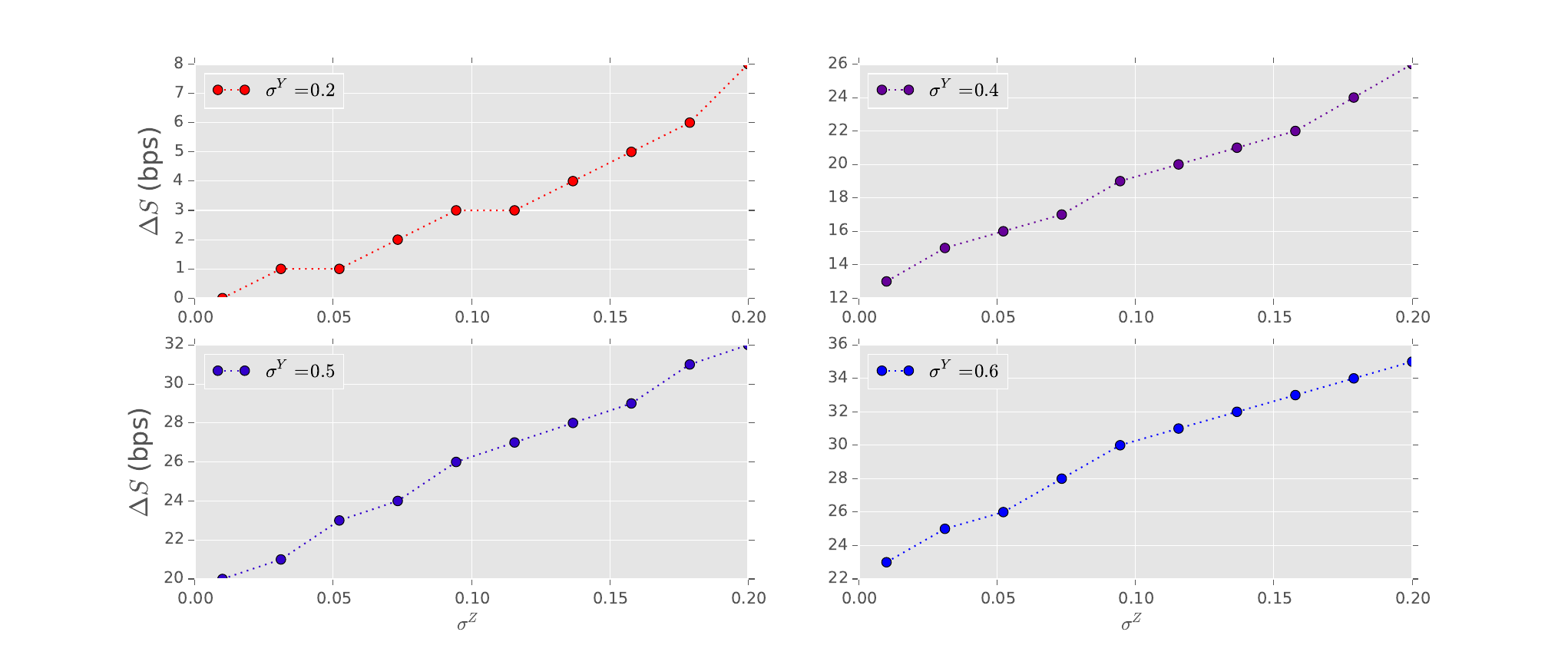}}
{%
  \caption{{5Y} CDS par--spread {impact vs $\sigma^Z$ and} $\sigma^Y$.
  {The reference value is produced using the parameters' values in Table \ref{tbl:sigmaFX}.}}%
\label{fig:sigmaFX}
}
\end{floatrow}
\end{figure}

\begin{table}
\centering
\begin{tabular}{ r@{.}l r@{.}l r@{.}l r@{.}l r@{.}l r@{.}l r@{.}l }
\toprule
\multicolumn{2}{ c }{$z$}&\multicolumn{2}{ c }{$\mu$}&\multicolumn{2}{ c }{$\rho$}&\multicolumn{2}{ c }{$a$}&\multicolumn{2}{ c }{$b$}&\multicolumn{2}{ c }{$y$}&\multicolumn{2}{ c }{$T$}\\
\midrule
0&8&0&0&0&5&0&0001&-210&0&-4&089&5&0\\
\bottomrule
\end{tabular}
  \caption{Parameters used to produce the results shown in Figure \ref{fig:sigmaFX}.}%
  \label{tbl:sigmaFX}
\end{table}



In Figure \ref{fig:rhoSpread} we show the sensitivity of par--spreads to the value of diffusive correlation $\rho$.  The dependence of par spreads on the correlation is extremely weak for values of $\sigma^Y$ in the range of 20\%. Around this level of log-hazard rate volatility, the maximum change that correlation can produce on the quanto-par spreads is 10 bps. 
From Figure \ref{fig:rhoSpread}, a {more realistic} value of $\sigma^{Y}$ of {60}\%  is required to observe an impact of around {30} bps on the 5Y par--spread when changing the correlation from $-1$ to 1, showing the limits of a purely diffusive correlation model in explaining large differences between domestic and quanto-corrected {CDS par-}spreads.

\begin{figure}
\begin{floatrow}
\floatbox{figure}[\textwidth]
{\includegraphics[width = \textwidth]{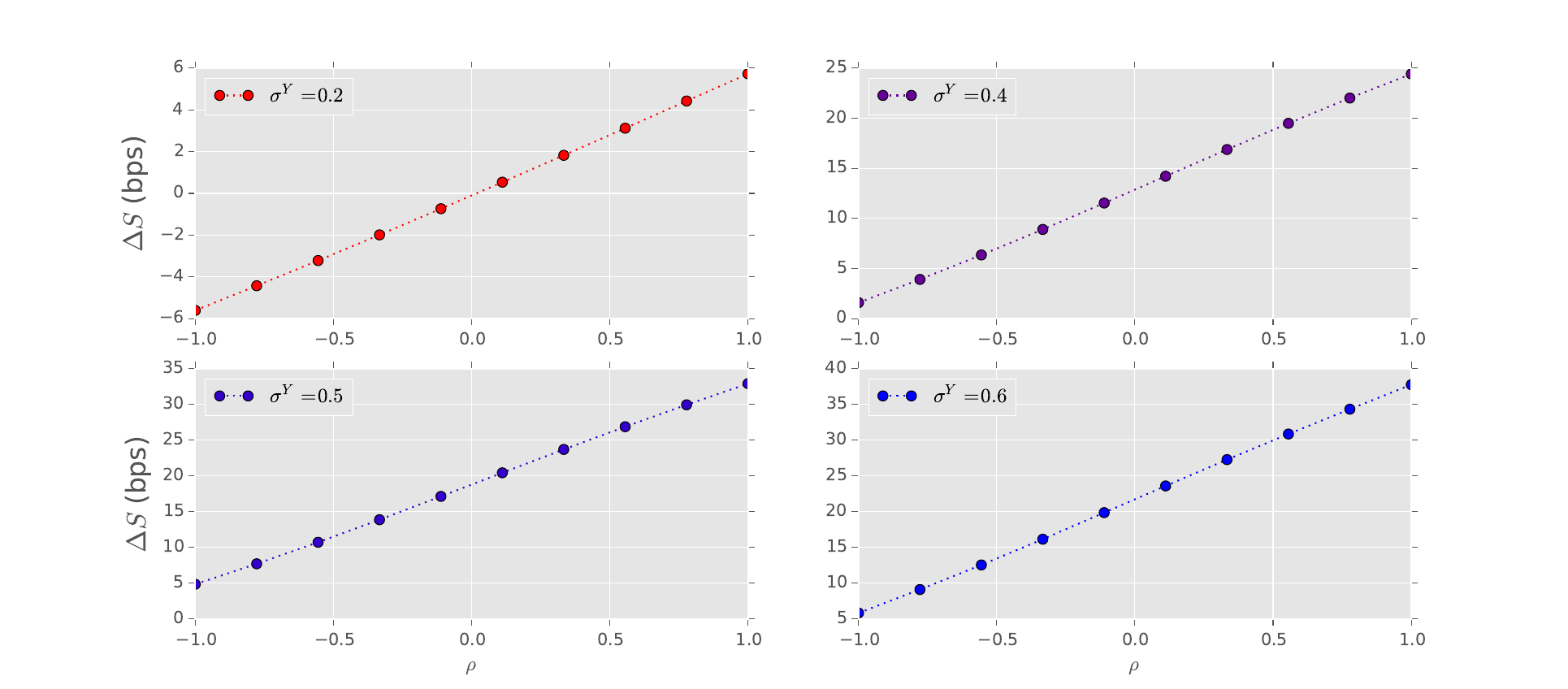}}
{%
  \caption{5 years par--spread {impact vs $\rho$ and} $\sigma^Y$. {The reference value is produced using the parameters' values in Table \ref{tbl:rhoSpread}.}}%
\label{fig:rhoSpread}
}
\end{floatrow}
\end{figure}

\begin{table}
\centering
\begin{tabular}{ r@{.}l r@{.}l r@{.}l r@{.}l r@{.}l r@{.}l r@{.}l }
\toprule
\multicolumn{2}{ c }{$z$}&\multicolumn{2}{ c }{$\mu$}&\multicolumn{2}{ c }{$\sigma^Z$}&\multicolumn{2}{ c }{$a$}&\multicolumn{2}{ c }{$b$}&\multicolumn{2}{ c }{$y$}&\multicolumn{2}{ c }{$T$}\\
\midrule
0&8&0&0&0&1&0&0001&204&0&-4&089&5&0\\
\bottomrule
\end{tabular}
  \caption{Parameters used to produce the results shown in Figure \ref{fig:rhoSpread}.}%
\label{tbl:rhoSpread}
\end{table}

There are circumstances where the  basis between par--spreads  of CDSs in different currencies can be sensibly higher than these values. In those cases, a purely diffusive model for the hazard rate is not sufficient to explain the observed basis and an approach where dependence is induced by devaluation jumps is required.  As an example of an historical occurrence of such a wide basis, we show in Section \ref{sec:backtest} results of model calibrations to  the time series of  par--spreads for EUR-denominated and USD-denominated 5Y CDSs on the Italian Republic. 

In the different context of impact of dependence on CDS credit valuation adjustments, even under collateralization, Brigo et al \cite{brigo3,brigo4, BMP} show that a copula function on the jump to default exponential thresholds may be necessary to obtain sizable effects when looking at credit--credit dependence, pure diffusive correlation not being enough.

\subsection{Test on {the {I}mpact of {T}enor and {C}redit {W}orthiness on the {Q}uanto {C}orrection}}\label{sec:otherTest}

To test the relation given in Eq \eqref{eqn:gamma2k}, we set the diffusive correlation to zero and we chose the following set of log-hazard rate parameters:
\begin{align*}
a &= 1.00e-004,&
b &=-210.45&
\sigma &= 0.2,&
\end{align*}
whereas we have produced low spread scenarios and high spread scenarios by choosing two different values for $Y_0$, the first one, $y^L =-4.089$, such that the resulting CDS par spread term structure is flat at around 100 bps, and the second one, $y^H =-2.089$ such to produce a flat CDS par spread term structure with a value of around 740 bps.

Figures \ref{fig:kGammaLow} and \ref{fig:kGammaHigh} show, in line with the nature of the approximation \eqref{eqn:gamma2k}, that the approximation is less accurate for higher maturities, as evident in both charts by comparing blue lines (short maturities) with red lines (long maturities){. It is also less accurate} and for higher values of CDS spreads, as highlighted by the comparison between Figure \ref{fig:kGammaHigh} (high spreads) and Figure \ref{fig:kGammaLow} (low spreads).

\begin{figure}
\begin{floatrow}
\floatbox{figure}[.5\textwidth]
{\includegraphics[width = .5\textwidth]{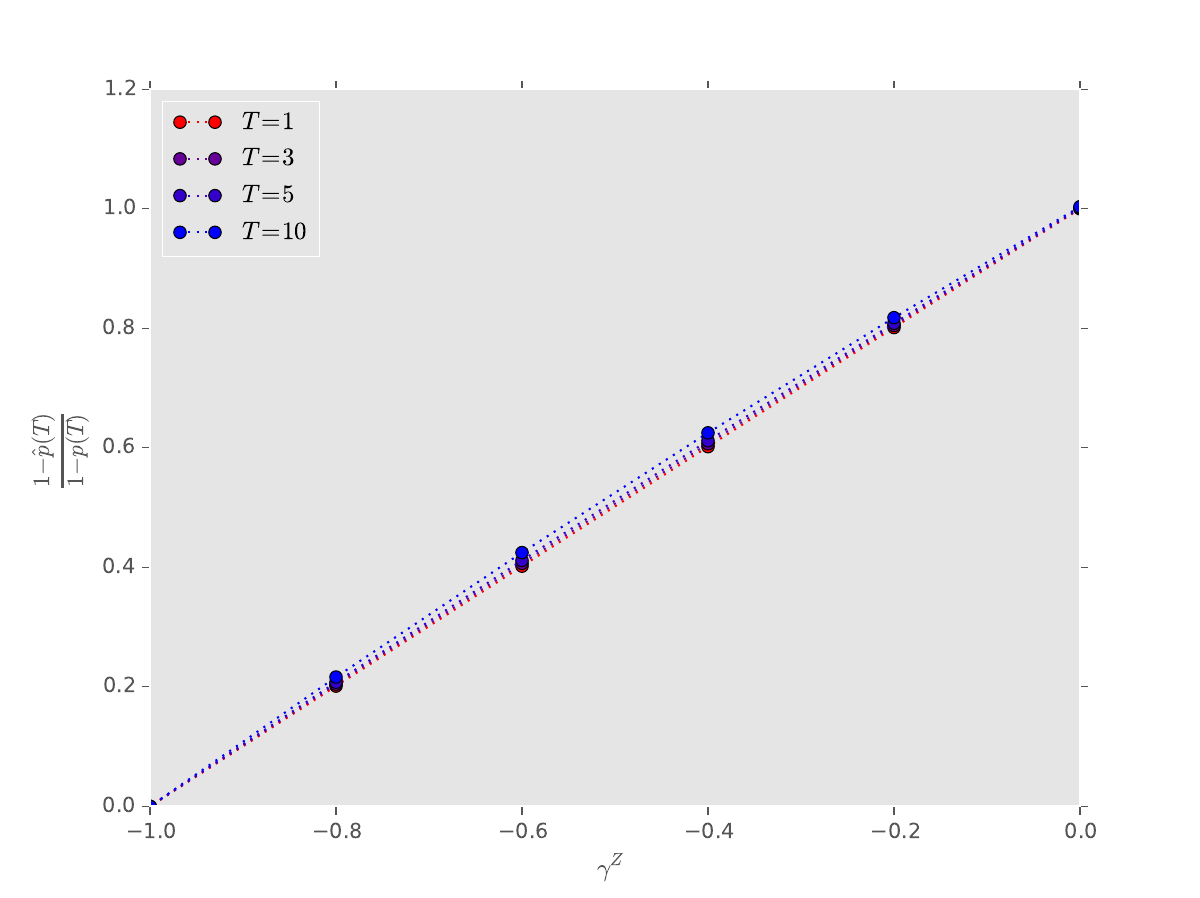}}
{%
  \caption{Comparison of curves $\sfrac{1-\hat P_0(T)}{1-P_0(T)}$ for different maturities in a low spreads scenario, $Y_0 = y^L$. }%
\label{fig:kGammaLow}
}
\floatbox{figure}[.5\textwidth]
{\includegraphics[width = .5\textwidth]{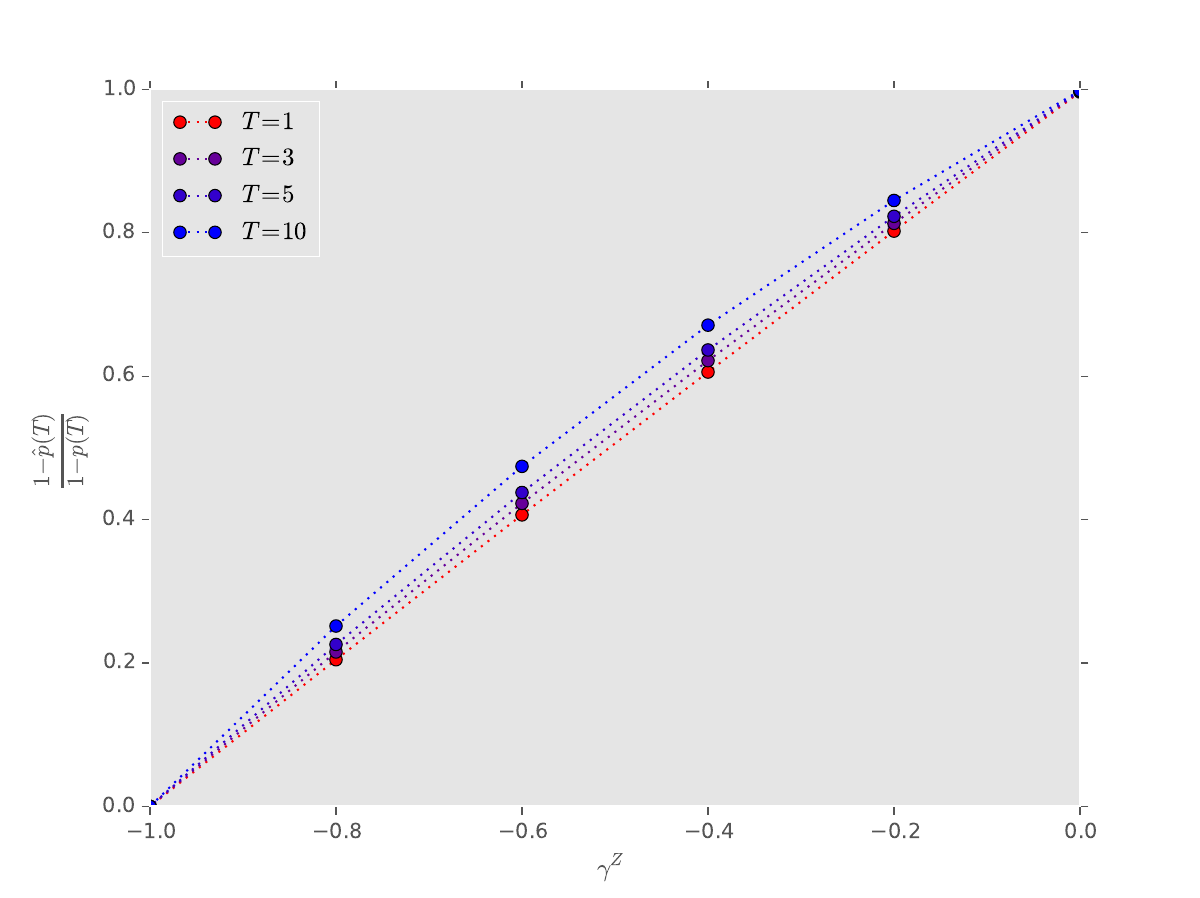}}
{%
  \caption{Comparison of $\sfrac{1-\hat P_0(T)}{1-P_0(T)}$ for different maturities  in a high spreads scenario, $Y_0 = y^H$.}%
\label{fig:kGammaHigh}
}
\end{floatrow}
\end{figure}

\subsection{Correlation {I}mpact on the {S}hort {T}erm {V}ersus {L}ong {T}erm}

We checked numerically the robustness of the theoretical relation between survival probabilities and $\gamma^Z$ that was shown in Eq \eqref{eqn:gamma2k}. We calculated the ratio between the local and the quanto-corrected survival probability returned by the exponential OU model for different maturities and for different values of $\rho$. Furthermore, we express this value as a function of $\gamma^Z$, we call it $\hat q(\gamma)$, and we check how this value {is} affected by changes in $\gamma$. We then compare $\hat q$ with the limit-case value provided by Eq \eqref{eqn:gamma2k}
\begin{equation}
q(\gamma) := 1 + \gamma.
\end{equation}

The results, in the form of a percentage difference $\sfrac{q}{\hat q} - 1$,  are reported in Table \ref{tbl:kImpl}.

\begin{table}
\centering\small
\begin{tabular}{ r@{.}l r@{.}l r@{.}l r@{.}l c r@{.}l r@{.}l r@{.}l c r@{.}l r@{.}l r@{.}l }
\toprule
\multicolumn{2}{ c }{}&\multicolumn{6}{ c }{ $T = 1$ }& &\multicolumn{6}{ c }{$T = 4$}& &\multicolumn{6}{ c }{$T = 10$}\\
\cmidrule{3-8} \cmidrule{10-15} \cmidrule{17-22}
\multicolumn{2}{ c }{$\gamma$}&\multicolumn{2}{ c }{$ \rho= -0.9 $}&\multicolumn{2}{ c }{$ \rho= 0 $}&\multicolumn{2}{ c }{$ \rho= 0.9 $}& &\multicolumn{2}{ c }{$ \rho= -0.9 $}&\multicolumn{2}{ c }{$ \rho= 0 $}&\multicolumn{2}{ c }{$ \rho= 0.9 $}& &\multicolumn{2}{ c }{$ \rho= -0.9 $}&\multicolumn{2}{ c }{$ \rho= 0 $}&\multicolumn{2}{ c }{$ \rho= 0.9 $}\\
\midrule
-0&99&0&24 \%&0&08 \%&$-0$&07 \%& &$-2$&60 \%&$-3$&33 \%&$-4$&05 \%& &$-31$&74 \%&$-34$&93 \%&-35&58 \%\\
-0&50&0&43 \%&0&28 \%&0&12 \%& &$-0$&15 \%&$-0$&89 \%&$-1$&62 \%& &$-30$&33 \%&$-26$&68 \%&$-25$&72 \%\\
-0&25&0&53 \%&0&37 \%&0&22 \%& &1&11 \%&0&37 \%&$-0$&36 \%& &$-15$&84 \%&$-15$&01 \%&$-14$&87 \%\\
0&00&0&63 \%&0&47 \%&0&31 \%& &2&38 \%&1&65 \%&0&91 \%& &-1&16 \%&$-1$&26 \%&$-1$&37 \%\\
0&25&0&73 \%&0&57 \%&0&41 \%& &3&67 \%&2&93 \%&2&20 \%& &13&47 \%&13&71 \%&14&60 \%\\
0&50&0&83 \%&0&67 \%&0&51 \%& &4&96 \%&4&23 \%&3&50 \%& &28&04 \%&29&82 \%&35&40 \%\\
\bottomrule
\end{tabular}
\caption{Deviation from the relation in \eqref{eqn:gamma2k} expressed as percentage error between $1+\gamma$ and $\hat q(\gamma) $.}\label{tbl:kImpl}
\end{table}

They show, as hinted in \cite{FxBofa}, that correlation has a smaller impact on short term survival probabilities: moving the correlation between the values of $-0.9$ and $0.9$ has an absolute impact of 0.3\% on our results for 1Y survival probabilities, whereas the impact for 4Y survival probabilities is 1.45\% and for 10Y survival probabilities is almost 4\%. It has to be noted that for 10Y survival probabilities, $\hat q$ doesn't provide a good approximation of $q$  not even in case of null correlation. This last fact is in line with the discussion carried out in Section \ref{sec:kImpl}, as in this case the hypothes{e}s under which the approximation was deduced are not valid.

\subsection{Model {C}alibration to {M}arket {D}ata for  2011--2013}\label{sec:backtest}
In this section we present the results of the calibration of {the} model described in Section \ref{sec:pEqJumps}, {where pricing currency and liquid currency coincide and are USD, and where we considered two contractual currencies, EUR and USD}. 

We used the observed CDSs spreads on Italy, both the USD-denominated ones and the EUR-denominated ones, to calibrate the model parameters. In principle, also single-name CDS swaptions could be used in this calibration process {(see \cite{brigo2})}, but, given the lack of liquidity on this instrument{, we preferred proxying them with the at-the-money implied volatilities quoted for options on iTraxx Main}.
\subsubsection{Market {D}ata {D}escription}\label{sec:marketData}
We calibrated the model to the market data for the three years using the time range 2011-2013. 
Let $\mathcal T = \{t_0\dots, t_N\}$ denote the dates in this sample period. We made the following assumptions on the market data:

\begin{enumerate}[i)]
\item
we consider the CDS par--spreads on Republic of Italy with  5 years and 10 years {tenor}, both in USD and in EUR;
\item
we use the same short rate for domestic and foreign currency
\begin{equation}
r(t_i) = \hat r(t_i) = r,\quad t_i \in \mathcal T,
\end{equation}
\item
on every $t_i\in \mathcal T$ we assign the {value of the} at--the--money  Black--volatility {from an option with 6 months expiry} to $\sigma^Z$;
\item
we {keep} the speed of mean reversion $a$ of $\process Y$ flat at the level $\num{0.0001}$;
\item
on every $t_i\in \mathcal T$ we calibrated $\sigma^Y$  to the at-the-money option Black volatility for expiry one month.
\end{enumerate}

Denoting by $p^Y:=(b, y_0)$ the parameters to be calibrated for $\process Y$ that are needed in single currency CDS pricing,  and by $p:= (b, y_0, \rho, \gamma)$ the set of parameters needed to price a quanto CDS, we {adopted} the following procedure to calibrate the model  in Eq \eqref{eqn:yEUR}--\eqref{eqn:rhoEUR}:
\begin{enumerate}[i)]
\item
first we calibrated $p^Y$ to the USD-denominated par--spread for the given date. We kept the parameters $a$ and $\sigma^Y$ fixed at a level of  \num{0.0001}  and 50\% respectively;
\item
we calibrated $\sigma^Y$ to the CDS index option, keeping the $p^Y$ at the level calibrated at the previous step;
\item
we used the calibrated value of $p^Y$ as a starting point in the iterative routine carried out to calibrate the set of model parameters $p$ to both the EUR-denominated and the USD-denominated CDSs. The starting guess point to calibrate $p$ can be written in terms of the calibrated point $p^Y$ as $p_0 = (p^Y_1, p^Y_2, \gamma_0, \rho_0)$, where $\gamma_0$ and $\rho_0$ are the guess values for $\gamma$ and $\rho$. We kept $\sigma^Y$ fixed at the level calibrated at the previous step.
\end{enumerate}

\subsubsection{Results}\label{sec:basktextResults}
In this section we show the results of the {repeated daily} calibrations to the 3 years of data contained in $\mathcal T$. The calibrated $\gamma$ and $\rho$ are showed in Figure \ref{fig:gammaRhoBasis} together with the relevant market data used in calibration, EUR-denominated and USD-denominated CDS par spreads  for 5 years maturities, $\EUR{\Y{S}{5}}$ and $\USD{\Y{S}{5}}$, and for 10 years  maturities, $\EUR{\Y{S}{10}}$ and $\USD{\Y{S}{10}}$ .

\begin{figure}
\centering
\includegraphics[width = \textwidth]{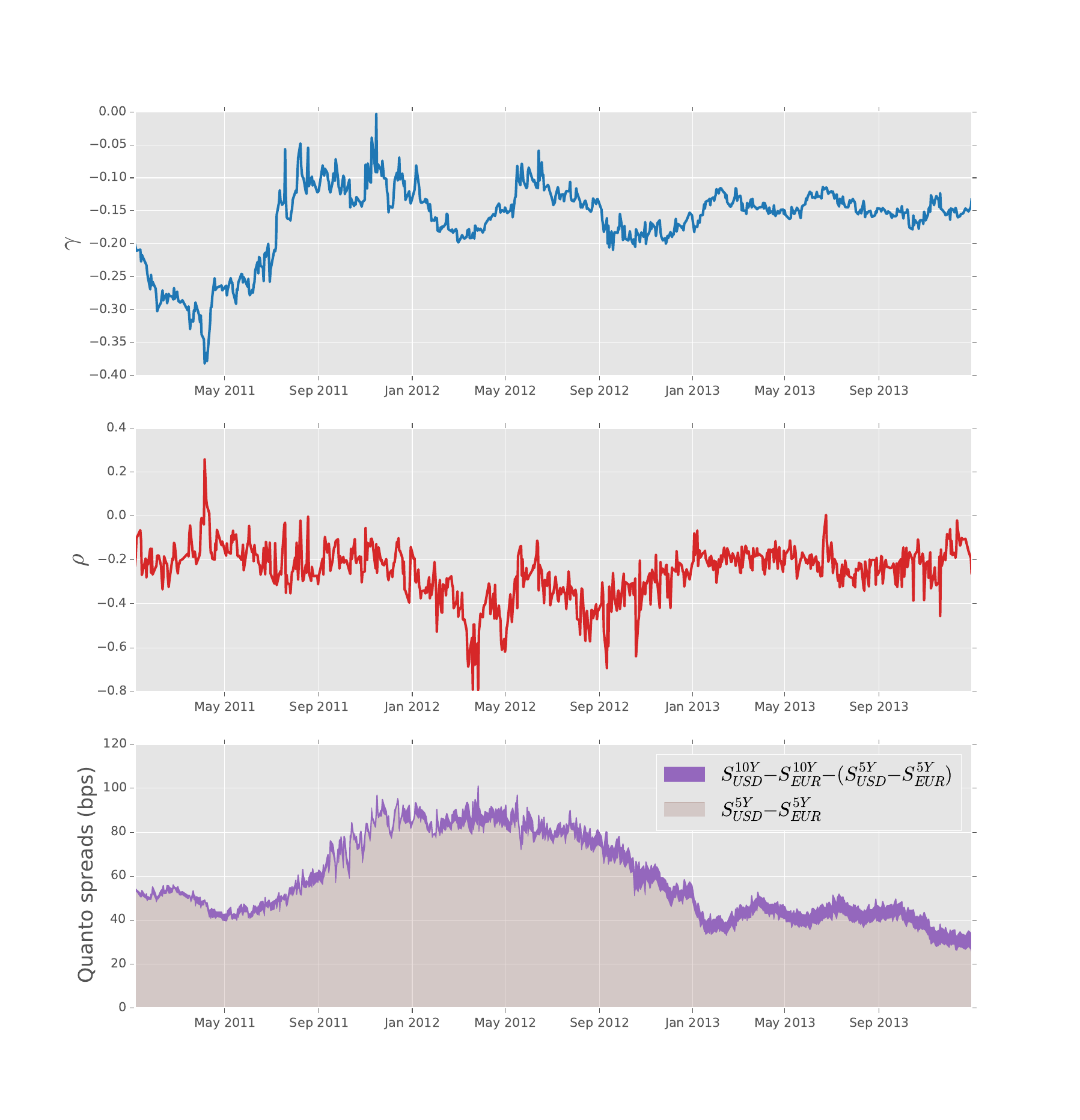}
 \caption{{The t}op chart shows the calibrated $\gamma$ throught $\mathcal T$. {The m}iddle chart shows the calibrated $\rho$ thorught $\mathcal T$. {The b}ottom chart shows the time series of {corresponding quanto }CDS par--spreads.}%
\label{fig:gammaRhoBasis}
\end{figure}

The aim of this section is to interpret the calibrated parameters in terms of market data. To do so, we{ will} be relying on the theoretical results from the previous section. 

\paragraph{Interpretation of the {D}evaluation {F}actor $\gamma^Z$}

For the devaluation rate, $\gamma^Z$, we exploited the results from Section \ref{sec:kImpl}, and we used the  relative basis spreads as an approximation
\begin{equation}\label{eqn:approxGammaZ}
\gamma^Z \approx \frac{\EUR{S} - \USD{S}}{\USD{S}}.
\end{equation}
As shown in Proposition \ref{prop:gammaK}, the simplified relation between $\gamma^Z$ and the ratio of the quanto and non--quanto corrected default probabilities is true for small values of the quantity $\int_t^T \lambda_s \dif s$ so, {since we could not} control the credit quality in backtest, we relied on the time--to--maturity $T-t$ to achieve a good approximation. However, due to liquidity reasons, we used CDS par--spreads with   5 years and 10 years {tenor}, and these  maturity values {can be} too large. Therefore we used model--implied par--spreads for this test; in this way we have been able to use also short maturities, like 1 year, that are usually not very liquid in the market.

The comparison between $\gamma^Z$ and its market-data approximation is showed in Figure \ref{fig:gammaScatter}. The left-hand chart, where 1Y-spreads have been used to build the relative basis spread, shows a surprisingly good agreement between the two variables. The same agreement does not hold for the right-hand chart, where 5Y-spreads were instead used. This is in line with the result of Proposition \ref{prop:gammaK}, that was derived under a limit hypothesis of short maturities.

{It is worth highlighting that the approximation provided by Eq \eqref{eqn:approxGammaZ} would be an exact relation between $\gamma^Z$, $\EUR S$, and $\USD S$ for contracts for which it is possible to approximate the stream of the premium leg's quarterly-spaced cash-flows  with a continuously compounded stream of payments and in a setting where either the hazard rate was modeled as a deterministic function of time and where the CDS par--spread term structure was flat or in a setting where the hazard rate was modeled as a constant.}

\begin{figure}
\subfloat[Relative basis spread for 1Y maturity CDSs.]
{\includegraphics[width = .49\textwidth]{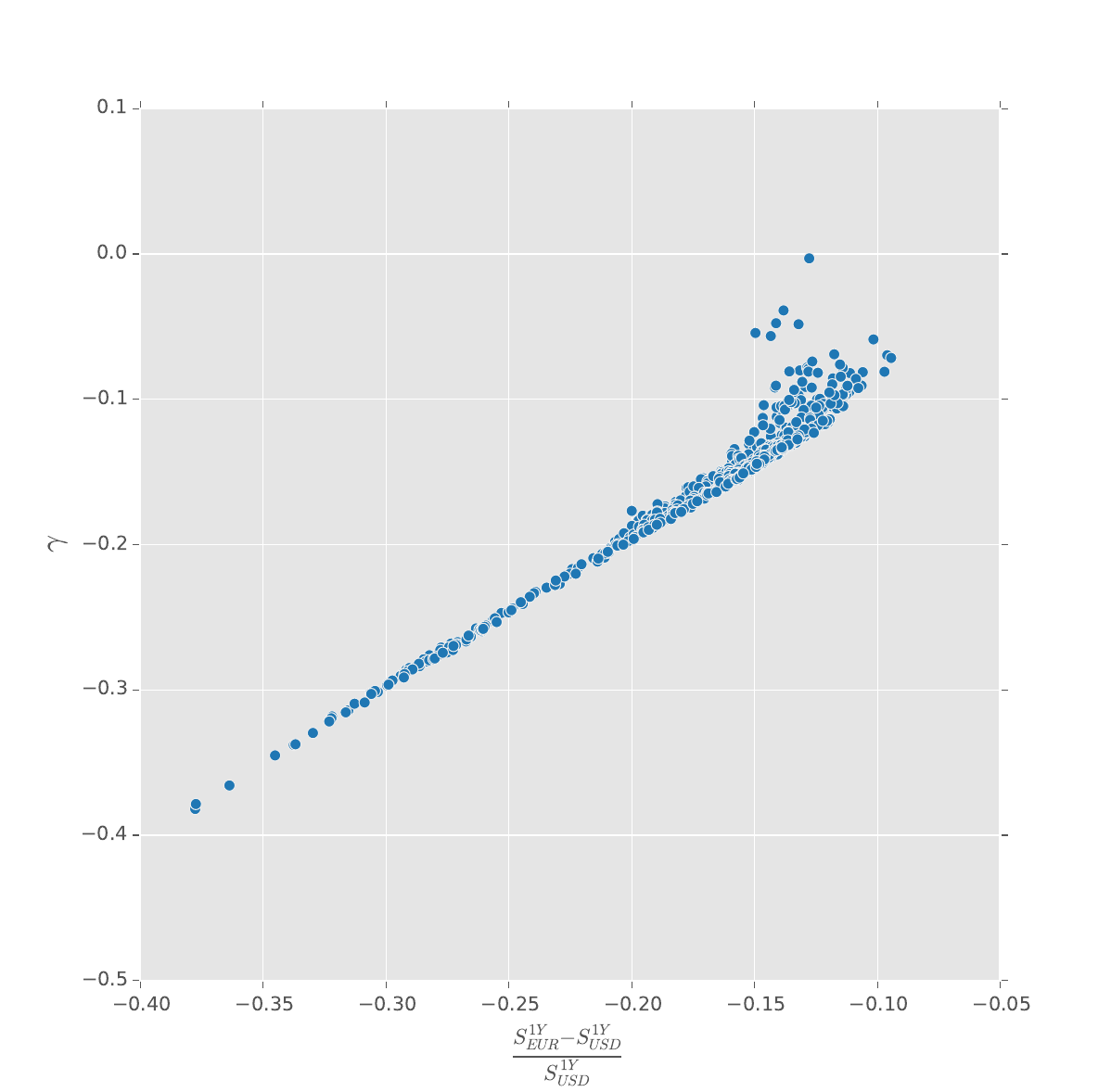}}
{%
\label{fig:gamma1YScatter}
}
\subfloat[Relative basis spread for 5Y maturity CDSs.]
{\includegraphics[width = .49\textwidth]{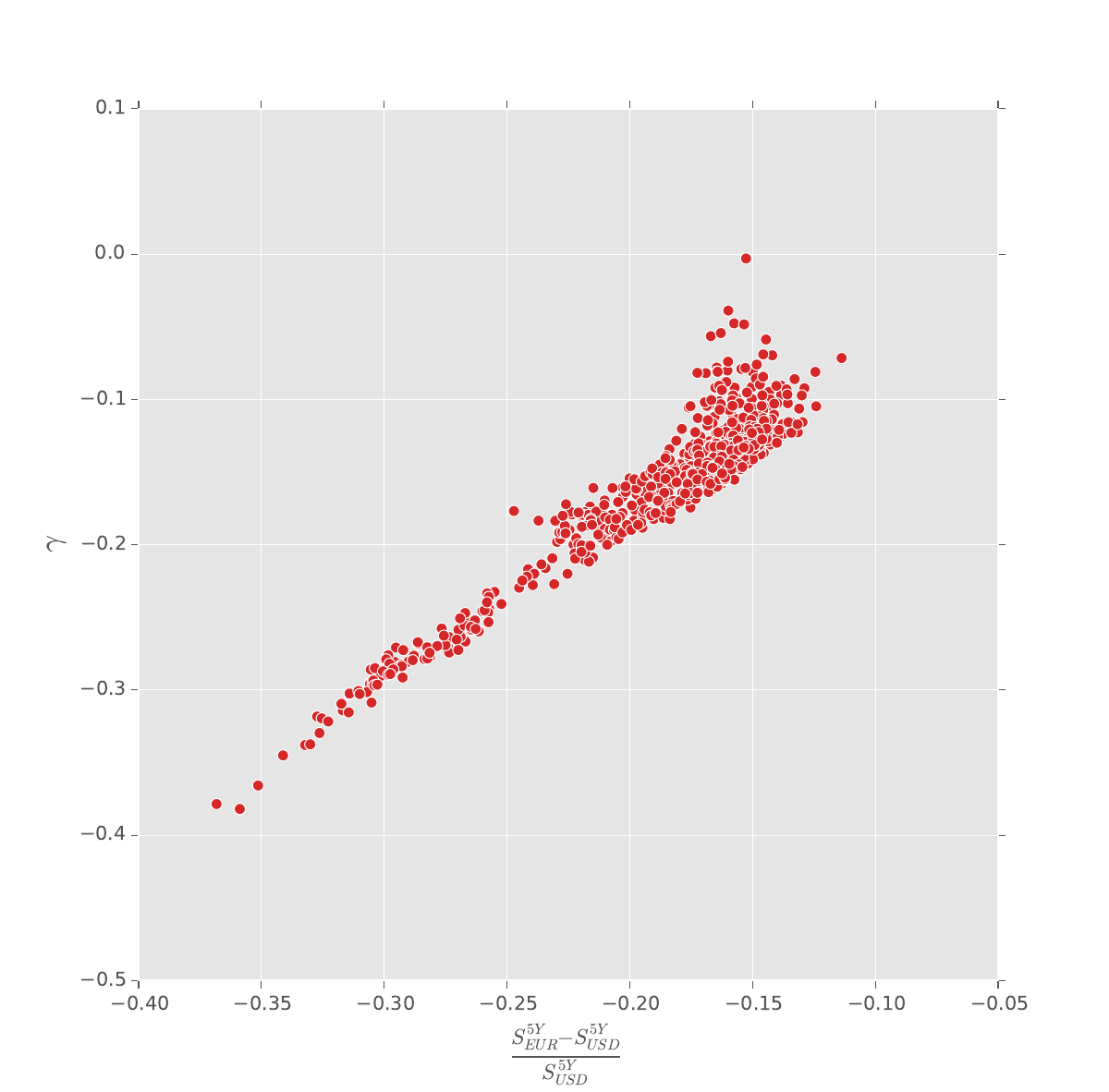}}
{%
\label{fig:gamma5YScatter}
}
\caption{Scatter plot comparing the calibrated $\gamma^Z$ in ordinates with a relative basis spread in abscissas. As discussed in Section \ref{sec:basktextResults}, the chart shows that the relative basis spread is a good estimate of the devaluation rate if the spreads have short maturities}
\label{fig:gammaScatter}
\end{figure}

\paragraph{Interpretation of the {I}nstantaneous {C}orrelation {P}arameter $\rho$}

In order to provide a similar assessment on the parameter $\rho$, we relied on some heuristic results derived in \cite{jpmqcds}. In that technical report, a simplified pricing formula based on cost of hedging arguments is presented for quanto CDS .
Their result can be written in terms of the variable defined by our framework as
\begin{equation}
\frac{\EUR{S}(T) - \USD{S}(T)}{\USD{S}(T)} \approx {\gamma^Z} + \sigma^Y \sigma^Z \rho \rpv(T),
\end{equation}
{where $\rpv(t)$ is the risky annuity of a CDS with tenor $t$ years.}
We applied the formula above to two tenor points $T_1$ and $T_2$ obtaining two equations, one for each tenor. In order to  test the values of $\rho$ that we obtained in calibration, we worked out a single equation as a difference between the equations for the two tenor points:
\begin{equation}\label{eqn:jpm}
\frac{\EUR{S}(T_2) - \USD{S}(T_2)}{\USD{S}(T_2)}  - \frac{\EUR{S}(T_1) - \USD{S}(T_1)}{\USD{S}(T_1)} \approx  \sigma^Y \sigma^{Z} \rho \left(\rpv(T_2) - \rpv(T_1)\right).
\end{equation}
Specifically, we chose $T_1 = 1$, $T_2 = 10$ and we used the model--implied values of $\EUR S(T_1)$, $\EUR S(T_2)$, $\USD S(T_1)$, $\USD S(T_2)$, $\rpv(T_1)$ and $\rpv(T_2)$. We further used  the values $\sigma^{Z}$ coming from the market while the values of $\sigma^Y$ and $\rho$ are the ones obtained in calibration ad discussed in Section \ref{sec:marketData}.
The results are presented in Figure \ref{fig:rhoScatter} and they show {a scatterplot} of the proposed relation between model parameters and market data. 
The data are reported for the whole time--range 2011--2013 in Figure \ref{fig:rhoScatter3Y}, {while Figure \ref{fig:rhoScatterYoY} contains the} year--by--year {plot}.
Due to the empirical nature of the Eq \eqref{eqn:jpm}, we didn't expect to find an exact relation between $\rho$ and other model parameters and market data. Nonetheless, a clear pattern is exhibited and this gives some confidence that such relation can be used to produce at least rough approximations for $\rho$ by using observable market data. 

\begin{figure}
\subfloat[Scatter plot for the time--range 2011--2013]
{\includegraphics[width = .49\textwidth]{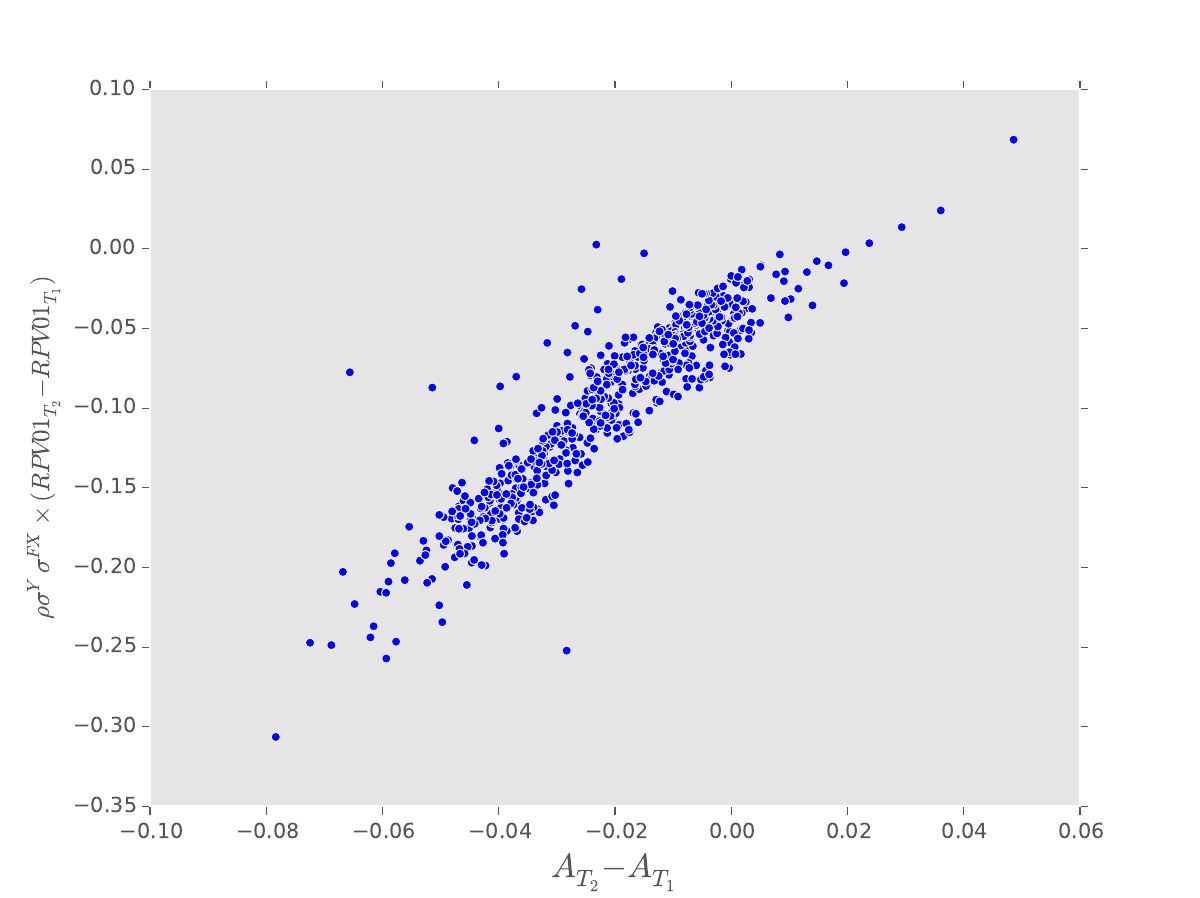}\label{fig:rhoScatter3Y}}
{%
}
\subfloat[Split by year]
{\includegraphics[width = .49\textwidth]{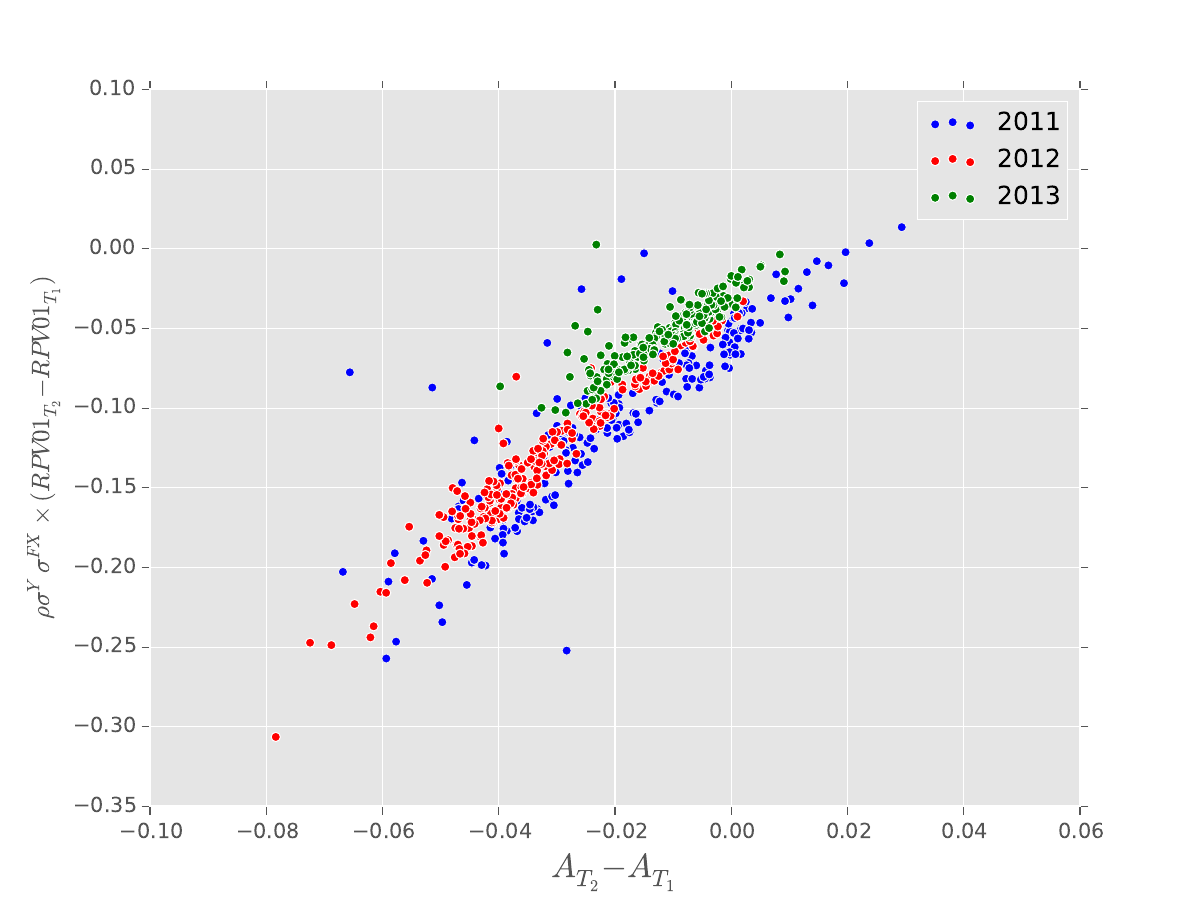}\label{fig:rhoScatterYoY}}
{%
}
\caption{Scatter plot comparing the product of the calibrated $\rho$, $\sigma^Y$,$\sigma^Z$ and the model--implied difference between risky annuities in ordinates with a difference of relative basis spread in abscissas (we used $A_T:= \frac{\EUR{S}(T) - \USD{S}(T)}{\USD{S}(T)}$). }
\label{fig:rhoScatter}
\end{figure}

\paragraph{{M}odel--{I}mplied vs {H}istorical {C}orrelation}
In Figure \ref{fig:histCorr50} we reported a comparison between the correlation parameter we obtained in calibration, $\rho$, and a historical estimator of correlation between {daily log-returns of CDS par--spreads for one--year tenor contracts and daily log--returns of the FX spot rate.
For assets where the market correctly prices gamma and cross--gamma risks, the basis between implied and realised covariance terms can be actually traded. This happens, for example, for implied and historical volatilities on  equity indices.}

{In times where the values of implied and realised covariance terms diverge, the effect of such trading strategies is usually to bring them closer. We interpret the lack of evident convergence between implied and realised correlation in the chart in Figure \ref{fig:histCorr50} as a signal of the lack of an efficient market for this correlation risk. }
\begin{figure}
{\includegraphics[width = \textwidth]{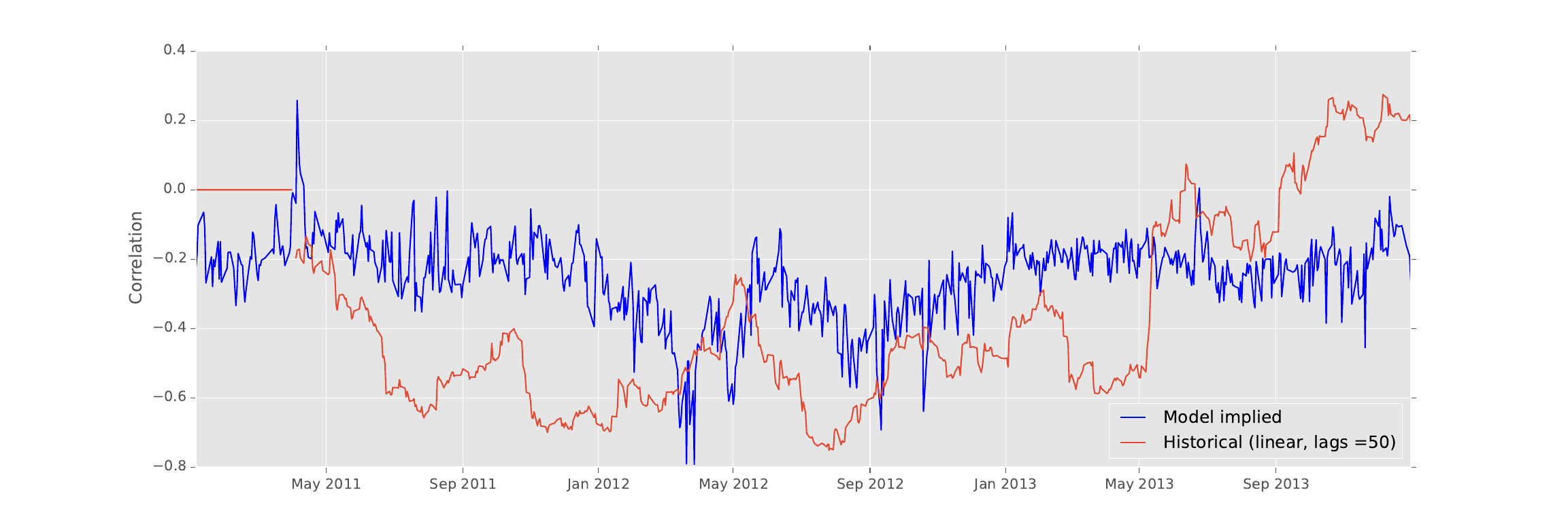}}
{%
  \caption{Implied and historical correlation  between EURUSD FX rate and Italy's CDS spread.}%
\label{fig:histCorr50}
}
\end{figure}

{The fact that the implied correlation is generally smaller in absolute value than the realised one is consistent with our modeling choices and with the estimator used to calculate the realised correlation. 
The historical correlation has been estimated {on a 50 days time--window} using log--returns of the FX rate and this would neglect the impact of the jump term on its instantaneous volatility. Such an underestimation of the instantaneous volatility of the jump--diffusion process used in our modelling approach would result in an overestimation of the correlation with the credit component.}

\section{Conclusions and Further Work}

We analysed default--driven FX devaluation jumps as a modelling mechanism. 
These can be used to explain the  basis in credit default swaps offering protection on the same entity but in different currencies. 
We studied the case of Italy and EUR vs USD protection in particular. 
We found that the jump mechanism allows one to explain the size of the basis, whereas pure shock correlation between FX rates and credit spread is not sufficient. 
Further applications we may consider in future work include wrong--way risk modelling in credit valuation adjustment (CVA) applications.

\begin{appendices}
\iftoggle{submission}{}{
\section{Some  results for semimartingales}
\subsection{Ito formula}\label{app:ito}

The usual Ito formula for continuous semimartingales and a $\mathcal C^2$ function $f$ can be written as
\begin{equation}\label{eqn:itoContin}
f(X_T) - f(X_t) = \int_t^T f'(X_{s-}) \dif X_s + \frac 1 2 \int_t^T f''(X_{s-}) \dif\ [X]_s
\end{equation}
where $\process{[X]}$ is the quadratic variation of the process $\process X$.
If we wish to include discontinuous processes, we need to compensate the previous formula, accounting for jumps in the left--hand side and in the right--hand side:
\begin{align*}
\mbox{left--hand side jumps:} & \sum_{t\leq s \leq T} (f(X_s)-f(X_{s-}))\\
\mbox{right--hand side jumps:} & \sum_{t\leq s \leq T} f' \Delta X_s + \frac 1 2 \sum f'' \Delta X_s^2
\end{align*}
where we used $X_{t-} := \lim_{s\rightarrow t}X_s$ and
\begin{equation*}
\Delta X_t := X_t - X_{t-}.
\end{equation*}
Introducing the jump terms, \eqref{eqn:itoContin} becomes
\begin{multline}\label{eqn:ito1}
f(X_T) - f(X_t) = \int_t^T f'(X_{s-}) \dif X_s + \frac 1 2 \int_t^T f''(X_{s-}) \dif\ [X]_s\\
 +\sum_{t\leq s \leq T} \left(f(X_s)-f(X_{s-}) - f'(X_{s-}) \Delta X_s - \frac 1 2 f''(X_{s-}) \Delta X_s^2 \right)
\end{multline}

The quadratic variation of $\process X$ is a FV process, hence it makes sense (because the sum in the next equation converges) to define its \emph{continuous part} as 
\begin{equation}
[X]^c_t := [X]_t - \sum_{s\leq t}\Delta X_s^2, \quad t > 0,
\end{equation}
so that \eqref{eqn:ito1} simplifies into
\begin{equation}
f(X_T) - f(X_t) = \int_t^T f'(X_{s-}) \dif X_s + \frac 1 2 \int_t^T f''(X_{s-}) \dif\ [X]^c_s
+ \sum_{t\leq s \leq T} \left(f(X_s)-f(X_{s-}) - f'(X_{s-}) \Delta X_s \right)
\end{equation}

We remind the interested reader to \cite{Protter} for a proof of the results above.
\begin{remark}[Processes with FV jumps]
If, as it is the case for the rest of this work, we consider semimartingales whose jumps component is a finite--variation process, also the sum
\begin{equation}
\sum_{s\leq t}f' \Delta_sX <\infty
\end{equation}
\end{remark}

\subsection{ Girsanov Theorem}\label{app:girsanov}
Let us consider given a filtered probability space $(\Omega, \mathcal F, \mathbb P, (\mathcal F_t))$ satisfying the usual hypothesis. In the following, we will always consider stochastic processes adapted to this space, or to a variation of it, where we will change the probability measure if needed.
Suppose $\process L$ is the Radon--Nikodym derivative that is used to change from a certain probability measure $\mathbb M$ to an equivalent measure $\mathbb N$. By the Change of Numeraire Theorem, the price at time $t$, say $V_t$, of a derivative paying off $\phi_T$ at time $T > t$, can be expressed equivalently using either one of the two measures as
\begin{equation}\label{eqn:chNumer}
V_t = M_t\mathbb E^{\mathbb M}\left[\left. \frac{\phi_T}{M_T}\right | \mathcal F_t\right] = N_t\mathbb E^{\mathbb N}\left[ \left. \frac{\phi_T}{N_T}\right | \mathcal F_t\right],
\end{equation}
where we used $\mathbb E^\mathbb X$ to denote the expected value with respect to the probability measure $\mathbb X$.
From Eq \eqref{eqn:chNumer}, the Radon--Nikodym derivative is given by
\begin{equation*}
\frac{\dif \mathbb M}{\dif \mathbb N}|_{\mathcal F_T} = \frac{N_t M_T}{M_t N_T} =: L_T, \quad T\geq t,
\end{equation*}
$(L_T,\, T\geq t)$ is a martingale under $\mathbb N$. Girsanov Theorem gives a decomposition of local martingale in the $\mathbb N$ measure as a local martingale in the $\mathbb M$ measure plus a FV process:
\begin{theorem}[Girsanov]\label{theo:girsanov}
Let $\process X$ be a local martingale with respect to $\mathbb N$. Then the process
\begin{equation}
Y_t = X_t - \int_0^t \frac{\dif\ [X, L]_s }{L_s}
\end{equation}
is a local martingale with respect to $\mathbb M$.

If $[X,L]$ is $\mathbb M$--locally integrable, the process
\begin{equation}
Y_t = X_t - \int_0^t \frac{\dif\ \langle X, L\rangle_s }{L_s}
\end{equation}
is a local martingale with respect to $\mathbb M$.
\end{theorem}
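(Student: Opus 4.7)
The plan is to invoke the standard Bayes-type criterion for local martingales under a change of measure: a càdlàg adapted process $Y$ is an $\mathbb{M}$-local martingale if and only if the product $YL$ is an $\mathbb{N}$-local martingale. Hence it suffices to verify that $YL$ is an $\mathbb{N}$-local martingale. A preliminary stopping-time localization reduces matters to the case where all processes in sight are uniformly integrable martingales with $L$ bounded away from zero, so I can work with genuine martingales and restore the word ``local'' at the end.

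For the first statement, write $Y=X-A$ with $A_t:=\int_0^t L_s^{-1}\,d[X,L]_s$, a finite-variation process. Applying the integration-by-parts formula to $XL$ and $AL$ gives
\begin{equation*}
d(XL)_t = L_{t-}\,dX_t + X_{t-}\,dL_t + d[X,L]_t, \qquad d(AL)_t = L_{t-}\,dA_t + A_{t-}\,dL_t + d[A,L]_t,
\end{equation*}
and subtracting yields
\begin{equation*}
d(YL)_t = L_{t-}\,dX_t + Y_{t-}\,dL_t + \bigl( d[X,L]_t - L_{t-}\,dA_t - d[A,L]_t \bigr).
\end{equation*}
The crux of the argument is to show that the bracketed finite-variation term vanishes. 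On the continuous parts this is immediate from the definition of $A$, but at jumps one must carefully account for the difference between $L_{t-}$ and $L_t$. Using $\Delta A_t = L_t^{-1}\Delta X_t\,\Delta L_t$ together with $\Delta[A,L]_t = \Delta A_t\,\Delta L_t$, a short identity based on $L_t = L_{t-}+\Delta L_t$ collapses the three jump contributions into $\Delta X_t\,\Delta L_t = \Delta[X,L]_t$, producing the required cancellation. This jump-by-jump bookkeeping is the only delicate step.

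Once the bracketed term vanishes, $d(YL)_t = L_{t-}\,dX_t + Y_{t-}\,dL_t$ is a sum of stochastic integrals of predictable locally bounded integrands against the $\mathbb{N}$-local martingales $X$ and $L$, so $YL$ is itself an $\mathbb{N}$-local martingale, and the Bayes criterion converts this into the desired $\mathbb{M}$-local-martingale property of $Y$.

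For the second statement, I would use that $\langle X,L\rangle$ is by construction the dual predictable projection of $[X,L]$, so $[X,L]-\langle X,L\rangle$ is itself an $\mathbb{N}$-local martingale and $\langle X,L\rangle$ is a predictable finite-variation process. The $\mathbb{M}$-local-integrability hypothesis ensures that $\int_0^\cdot L_s^{-1}\,d\langle X,L\rangle_s$ is a well-defined stochastic integral. Running the same integration-by-parts calculation with $\langle X,L\rangle$ in place of $[X,L]$, the predictability of $\langle X,L\rangle$ simplifies the jump analysis (the continuous part of $\langle X,L\rangle$ dominates), while the discrepancy $d[X,L]-d\langle X,L\rangle$ contributes only an $L_{t-}$-weighted increment of an $\mathbb{N}$-local martingale; combining both observations shows that the modified $YL$ is once again an $\mathbb{N}$-local martingale, and Bayes delivers the second assertion.
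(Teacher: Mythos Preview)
The paper does not actually prove this theorem: immediately after the statement it simply writes ``We refer to \cite{JeanYor} for the proof of this theorem.'' So there is no in-paper argument to compare against.

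Your approach is the standard textbook one (essentially what one finds in Jeanblanc--Yor--Chesney or Protter): invoke the Bayes criterion to reduce to showing that $YL$ is an $\mathbb N$-local martingale, apply integration by parts, and check that the finite-variation residue cancels. Your jump bookkeeping for the first assertion is correct and is precisely the heart of the matter --- the identity $L_{t-}\Delta A_t + \Delta A_t\,\Delta L_t = L_t\,\Delta A_t = \Delta X_t\,\Delta L_t = \Delta[X,L]_t$ is exactly why the statement carries $L_s$ rather than $L_{s-}$ in the denominator; with $L_{s-}$ the cancellation would fail by a term of order $(\Delta L_t)^2$.

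Your treatment of the second assertion is more impressionistic. The honest route is not to rerun the integration-by-parts computation with $\langle X,L\rangle$ in place of $[X,L]$, but rather to compare the two candidate processes directly: their difference is a Stieltjes integral of $L^{-1}$ against the $\mathbb N$-local martingale $[X,L]-\langle X,L\rangle$, and one must then argue that this difference is itself an $\mathbb M$-local martingale. This last step is where the $\mathbb M$-local-integrability hypothesis enters and where the distinction between $L_s$ and $L_{s-}$ in the integrand needs care; your sketch acknowledges the issue but does not quite resolve it. Since the paper defers entirely to the literature, this level of detail already exceeds what the paper provides.
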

We refer to \cite{JeanYor} for the proof of this theorem. An application of Girsanov theorem to Wiener processes is given in the following remark.

\begin{remark}
In case one knows that the Radon Nikodym derivative is given by
\begin{equation*}
\dif L_t = \sigma_LL_t\dif W_t^N
\end{equation*}
the decomposition of $\process{W^N}$ is easily obtained as
\begin{equation*}
\dif W^M = \dif W^N - \sigma_L \dif t
\end{equation*}
as $\dif\ \langle L, W^N\rangle_t = \sigma_L L_t \dif t$.
\end{remark}

The above Theorem will be applied to jump diffusion processes throughout this work. In particular, we will be considering Radon--Nikodym derivatives given by the Doleans--Dade exponential $\mathcal E(X)$ where
\begin{equation}\label{eqn:jdexpmart}
\dif X_t = \alpha \dif W_t + \beta \dif M_t,
\end{equation}
where $\alpha$ and $\beta$ will be either constant values or functions of $X_t$. More generally, 
the Doleans--Dade exponential $\mathcal E(X)$ is defined as the solution $Y$ of the following SDE
\begin{equation}\label{eqn:ddExp}
\dif Y_t  = Y_{t-}\dif X_t,\quad  Y_0 = 1.
\end{equation}
A proof that  a process like the one described in Eq \eqref{eqn:jdexpmart} is indeed a martingale, and not just a local martingale, is provided by Corollary 7 in \cite{ProtterShimbo}.
}

\section{Proof of Proposition \ref{prop:FX}}\label{app:proofFX}

\begin{proof}

The relation between $Z$ and $X$ is given by $X_t = \phi(Z_t)$ where $\phi(x) = 1/x$.	From Ito (see\iftoggle{submission}{, for example, \cite{JeanYor}}{ Appendix \ref{app:ito}})
\begin{align}
\dif X_t &= \dif \phi(Z_t)= \phi'(Z_{t-}) \dif Z_t + \frac 1 2   \phi''(Z_{t-})\dif\ [Z]^c_t + \sum_{s\leq t}\left( \left( \phi(Z_{s-} + \Delta Z_{s-}) - \phi(Z_{s-} ) \right) - \phi(Z_{s-}) {\Delta Z_{s-}}\right) \nonumber \\
&= \dif\left(\frac 1{Z_t}\right) = -\frac{\dif Z_t}{Z_t^2} + \frac{\dif\ [Z]^c_t}{Z_t^3} + \left( \frac{1}{Z_{t-} + \gamma^Z Z_{t-}} - \frac{1}{Z_{t-}}\right)\dif D_t + \frac{\Delta Z}{Z_t^2}\nonumber \\
&=- \bar \mu \frac{1}{Z_t}\dif t - \sigma^Z\frac{1}{Z_t}\dif  W_t^{(2)} - \gamma^Z \frac{1}{ Z_{t-}}\dif D_t + (\sigma^Z)^2\frac{1}{Z_t}\dif t + \frac{1}{Z_{t-}}\left( \frac{1}{1 + \gamma^Z } - 1 \right)\dif D_t + \frac{\gamma^Z }{Z_{t-}}\dif D_t \nonumber\\
&=- \bar \mu \frac{1}{Z_t}\dif t - \sigma^Z\frac{1}{Z_t}\dif  W_t^{(2)} + (\sigma^Z)^2\frac{1}{Z_t}\dif t + \frac{1}{Z_{t-}}\gamma^X \dif D_t \nonumber\\
&=- \bar \mu X_t\dif t - \sigma^ZX_t\dif  W_t^{(2)} + (\sigma^Z)^2X_t\dif t +X_{t-}\gamma^X\dif D_t
\end{align}
where we used $\gamma^X$ to denote the jumps of $\process X$, given by
\begin{equation}
\gamma^X = -\frac{\gamma^Z}{1+\gamma^Z}.
\end{equation}
We can now use Girsanov's Theorem in the form of Eq \ref{eqn:wienerTransform} for $\process{W^{(2)}}$ and Eq \ref{eqn:newLambda} for $\process D$ to decompose $\process X$ in a sum of local martingales in the new measure $\hat{\mathbb Q}$. As a result
\begin{align}
\dif X_t &=- \bar \mu X_t\dif t - \sigma^ZX_t(\dif  \hat W_t^{(2)} -\sigma^Z\dif t)  + (\sigma^Z)^2X_t\dif t +X_{t-}\gamma^X  (\dif \hat M_t  + (1+\gamma^Z)(1-D_t)\lambda_t \dif t)\nonumber \\
 &=- (\bar \mu -  \gamma^X(1+\gamma^Z)(1-D_t)\lambda_t \dif t)X_t\dif t - \sigma^ZX_t\dif  \hat W_t^{(2)}  +X_{t-}\gamma^X  \dif M_t  
\end{align}
Reminding that $\bar \mu$ is given by (see Eq \eqref{eqn:mu} and \eqref{eqn:compMu}) $ r -  \hat r - (1-D_t)\gamma^Z  \lambda_t$, the $\mathbb Q$-dynamics  of $\process Z$ can be written as
\begin{align}
\dif X_t &=( \hat r -  r + \gamma^Z(1-D) \lambda_t + \gamma^X(1+\gamma^Z)(1-D_t)\lambda_t ) X_t \dif t - \sigma^ZX_t\dif  \hat W_t^{(2)}  +X_{t-}\gamma^X  \dif \hat M_t  \nonumber \\
&=(\hat r -  r)X_t \dif t - \sigma^ZX_t\dif  \hat W_t^{(2)}  +X_{t-}\gamma^X  \dif \hat M_t.   \label{eqn:zJumpMart}
\end{align}

\end{proof}

\section{Proof of Proposition \ref{prop:gammaK}} \label{app:proof_gammaK}
\begin{proof}
Using Bayes' formula we can write 
\begin{equation*}
\E{t}{Z_T \1{\tau>T}} = \E{t}{Z_T |\ \1{\tau>T}} \E{t}{\1{\tau>T}} 
\end{equation*}
Under the dynamics given by \eqref{eqn:jumpFX}, the FX rate has only one jump at the default time of the reference entity,  therefore it is subject to no jumps  conditioned to the event $\1{\tau>T}$. 
This fact, together with the independence between the Brownian motions driving the FX and the hazard rate processes, allows to write:
\begin{equation*}
 \E{t}{Z_T |\ \1{\tau>T}} = Z_0 \E{t}{e^{\mu(T-t) - \gamma^Z \int_t^T \lambda_s\dif s}} =  Z_0 e^{\mu(T-t)} \E{t}{e^{- \gamma^Z \int_t^T \lambda_s\dif s}}
\end{equation*}
so that the survival probabilities are linked by
\begin{equation*}
\hat p_t(T) = \frac{\E{t}{Z_T \1{\tau>T}} }{Z_t}\frac{B(t,T)}{\hat B(t,T)} = \E{t}{e^{- \gamma^Z \int_t^T \lambda_s\dif s}} p_t(T).
\end{equation*}
The above can be written in terms of default probabilities,
\begin{equation*}
1 - \hat p_t(T) = 1 - \E{t}{e^{- \gamma^Z \int_t^T \lambda_s\dif s}} p_t(T)= 1 - p_t(T) + \left(1- \E{t}{e^{- \gamma^Z \int_t^T \lambda_s\dif s}} \right)p_t(T),
\end{equation*}
so that the ratio of default probabilities can be expressed as
\begin{align}
\frac{1 - \hat p_t(T) }{1 - p_t(T) } &= 1 + \left(1- \E{t}{e^{- \gamma^Z \int_t^T \lambda_s\dif s}} \right)\frac{p_t(T)}{1 - p_t(T) }\nonumber \label{eqn:appC}\\
&= 1 + \left(1- \E{t}{e^{- \gamma^Z \int_t^T \lambda_s\dif s}} \right)\frac{\E{t}{e^{- \int_t^T \lambda_s\dif s}}}{1 - \E{t}{e^{- \int_t^T \lambda_s\dif s}} } .
\end{align}

Given that our aim is to find an approximation for small maturities, it is convenient to note that
\begin{equation}
e^{-\int_0^T\lambda_s \dif s } = 1 - \lambda_0 T + O(T^2) \quad\mbox{as }T\rightarrow 0
\end{equation}
so that we can write the right hand side of Eq \eqref{eqn:appC} at $t = 0$  as
\begin{align*}
\textrm{rhs Eq \eqref{eqn:appC}} &= 1+\left(1- \E{0}{1 - \gamma^Z  \lambda_0 T + O(T^2)} \right)\frac{\E{0}{1- \lambda_0 T+ O(T^2)} }{ \E{0}{ \lambda_0 T+ O(T^2)}  }\\
&= 1+\gamma^Z({ \lambda_0T + \mathbb E_0[O(T^2)]} )\frac{{1-  \lambda_0 T+ \mathbb E_0[O(T^2)]} }{ {  \lambda_0 T+ \mathbb E_0[O(T^2)]}  }\\
&= 1+\gamma^Z (1-  \lambda_0 T+ \mathbb E_0[O(T^2)])  \frac{\lambda_0T }{ {  \lambda_0 T+ \mathbb E_0[O(T^2)]}  } + \gamma^Z{ \mathbb E_0[O(T^2)]} \frac{{1-  \lambda_0 T+ \mathbb E_0[O(T^2)} }{ {  \lambda_0 T+ \mathbb E_0[O(T^2)]}  }\\
&= 1+\gamma^Z (1-  \lambda_0 T+ \mathbb E_0[O(T^2)])  \frac{1 }{  1+\frac{ \mathbb E_0[O(T^2)]}{\lambda_0T}  } + \gamma^Z{ \mathbb E_0[O(T^2)]} \frac{{1-  \lambda_0 T+ \mathbb E_0[O(T^2)]} }{ {  \lambda_0 T+ \mathbb E_0[O(T^2)]}  }
\end{align*}
from which we have
\begin{equation}
\frac{1 - \hat p_0(T) }{1 - p_0(T) } \rightarrow 1 + \gamma^Z,\quad\mbox{as }T\rightarrow 0.
\end{equation}
\end{proof}

\end{appendices}

\section*{{Disclaimers}}

The opinions and views are uniquely those of the authors and do not necessarily represent those of their employers.

\bibliographystyle{plainnat}
\bibliography{MYBIB}

\end{document}